\newcommand{\tp}{^{\mathsf{T}}}      
\newtheorem{lemma}{Lemma}
\newtheorem{prop}{Proposition}
\newtheorem{fact}{Fact}
\newcommand{\Lem}[1]{{Lemma~\ref{lem:#1}}}
\newcommand{\Prop}[1]{{Proposition~\ref{prop:#1}}}
\newcommand{\Sect}[1]{{Section~\ref{sect:#1}}}
\newcommand{\Fig}[1]{{Fig.~\ref{fig:#1}}}
\newcommand{\Eq}[1]{{Eq.~(\ref{eq:#1})}}
\newcommand{\be}{\begin{equation}}
\newcommand{\ee}{\end{equation}}
\newcommand{\ba}{\begin{array}}
\newcommand{\ea}{\end{array}}
\newcommand{\bea}{\begin{eqnarray}}
\newcommand{\eea}{\end{eqnarray}}
\newcommand{\calZ}{{\cal Z }}
\newcommand{\calL}{{\cal L }}
\newcommand{\calR}{{\cal R }}
\newcommand{\calN}{{\cal N }}
\newcommand{\calH}{{\cal H }}
\newcommand{\calG}{{\cal G }}
\newcommand{\calC}{{\cal C }}
\newcommand{\calA}{{\cal A }}
\newcommand{\calD}{{\cal D }}
\newcommand{\calP}{{\cal P }}
\newcommand{\CC}{\mathbb{C}}
\newcommand{\RR}{\mathbb{R}}
\newcommand{\mps}[1]{\mathrm{MPS}{({#1})}}
\newcommand{\mpo}[1]{\mathrm{MPO}{({#1})}}
\begin{document}

\title{Efficient Algorithms for Maximum Likelihood Decoding in the Surface Code}

\newcommand{\IBM}{IBM  Watson Research Center, Yorktown Heights, NY 10598, USA}

\author{Sergey \surname{Bravyi}}
\affiliation{\IBM}
\author{Martin \surname{Suchara}}
\affiliation{\IBM}
\author{Alexander \surname{Vargo}}
\affiliation{\IBM}

\date{\today}

\begin{abstract}
We describe two implementations of the optimal error correction algorithm known as the maximum likelihood decoder (MLD) for the 2D surface code with a noiseless syndrome extraction. First, we show how to  implement MLD exactly in time $O(n^2)$,
where $n$ is the number of code qubits. Our implementation uses a reduction from MLD to simulation of 
matchgate quantum circuits. This reduction  however requires a special noise model with independent bit-flip and phase-flip errors. Secondly, we show how to implement MLD approximately for more general noise models
using matrix product states (MPS).
Our implementation has running time
$O(n\chi^3)$ where  $\chi$ is a parameter that controls the approximation
precision. The key step of our algorithm, borrowed from the DMRG method, 
is  a subroutine for contracting a tensor network on the two-dimensional  grid. 
The subroutine uses MPS with a bond dimension $\chi$
to approximate the sequence of tensors arising in the course of contraction.  
We benchmark the MPS-based decoder against the standard 
minimum weight matching decoder observing a significant reduction
of the logical error probability  for $\chi\ge 4$.
\end{abstract}

\maketitle

\section{Introduction}


The surface code~\cite{Kitaev03,DKLP01} is one of the simplest and
most studied quantum error correcting codes. It 
can be realized on a two-dimensional grid of qubits such that the codespace is defined
by simple four-qubit parity check operators acting on nearest-neighbor qubits. 
Recent years
have witnessed a surge of interest in the surface code as a promising architecture for 
a scalable quantum computing~\cite{Fowler12surface,Ghosh12}.
Experimental advances in manufacturing of multi-qubit devices~\cite{Chow13,Barends14logic}
give us hope that a small-scale quantum memory based on the surface
code may become a reality soon. Given high operational costs of a quantum hardware
compared with the classical one, it is crucial
to put enough efforts in optimizing algorithmic, or software aspects of error correction. 
In the present paper we focus on optimizing the  decoding algorithm  that takes
as input measured syndromes of the parity checks  and computes a recovery 
operation returning a corrupted state of the memory back to the codespace.

As the name suggests, the maximum likelihood decoder (MLD)
is an algorithm that finds a recovery operation maximizing the probability of 
a successful error correction conditioned on the observed error syndrome. 
By definition, MLD is the optimal error correction algorithm for a fixed
quantum code and a fixed  noise model. The first  rigorous definition of MLD
for the surface codes was proposed by Dennis et al~\cite{DKLP01}. 
An important  observation made in~\cite{DKLP01} was that the computational
problem associated with MLD can be reduced to  computing the partition function 
of a classical Ising-like Hamiltonian on the two-dimensional lattice. 
This observation has generated a vast body of work exploring connections
between MLD and the statistical physics of disordered Ising-like Hamiltonians,
see for instance~\cite{Katzgraber09,Bombin10topological,Wootton12high,BombinPRX12}.
The insights made in~\cite{DKLP01} have also guided the search for
efficient implementations of MLD. Although an exact and efficient algorithm for MLD
remains an elusive goal, several approximate polynomial-time algorithms have
been discovered, most notably
the  renormalization group decoder due to
Duclos-Cianci and Poulin~\cite{Duclos10}, 
and the Markov chain Monte Carlo method
due to Hutter, Wootton, and Loss~\cite{Hutter14}.
In the case of concatenated codes an efficient exact algorithm for MLD based on the message
passing algorithm was proposed by Poulin~\cite{Poulin06optimal}.
By comparing MLD with the level-by-level decoder  commonly used for concatenated
codes, Ref.~\cite{Poulin06optimal} found that MLD offers a significant advantage with
almost two-fold increase of the error threshold for the depolarizing noise
and a significant reduction of the logical error probability. 

Here we propose an alternative method of implementing MLD 
in the case of the surface code
for two simple noise models known as the bit-flip noise and the depolarizing noise.
Our method combines the ideas of Dennis et al~\cite{DKLP01} and the standard
classical-to-quantum mapping  from classical 2D spin systems in the thermal equilibrium to
quantum 1D spin chains. 
It enables us to reduce the computational problem associated with MLD
to simulating a particular type of quantum dynamics for a chain of qubits.  

In the case of the bit-flip noise, MLD can be reduced to simulating
a quantum circuit with  a special type of two-qubit nearest-neighbor gates known as matchgates.
It was shown by Valiant~\cite{Valiant02} that quantum circuits composed
of matchgates can be efficiently simulated by classical means.
Matchgate circuits and their generalizations give rise to efficient holographic algorithms
for certain combinatorial problems~\cite{Valiant08} and efficient tensor network contraction 
methods~\cite{Cai06,Bravyi09}. Matchgate-based algorithms
have been  used to simulate quantum dynamics in systems of fermionic modes with quadratic
interactions~\cite{Knill01,TD02} and study statistics of dimer coverings
in classical lattice models~\cite{Fisher61,Kasteleyn61,Temperley61}. Here we demonstrate
that matchgates also have applications for quantum error correction. 
Our simulation algorithm based on fermionic Gaussian states~\cite{Bravyi04}
provides an exact implementation of MLD with the running time $O(n^2)$, where $n$
is the number of code qubits. The same algorithm can also be applied
to a noise model with independent bit-flip and phase-flip errors.
We note that a similar but technically different algorithm has been used by Merz and Chalker
in the numerical study of the random-bond 2D Ising model~\cite{Merz01}.

In the case of the depolarizing noise, MLD can be reduced to simulating
the dynamics generated by
matrix product operators with a small bond dimension. To perform the simulation  efficiently
we  conjecture that all intermediate states generated by this dynamics are weakly entangled. 
This enables us to employ a vast body of efficient classical algorithms for simulating
weakly entangled  quantum spin chains based on matrix product states (MPS),
see~\cite{Vidal03,Verstraete04,Murg07,Verstraete08review,Schollwock11}.
Our approximate implementation of MLD for the depolarizing noise has running time
$O(n\chi^3)$ where  $\chi$ is a parameter that controls the approximation
precision (the bond dimension of the MPS).  Although we do not have any rigorous arguments in support of the weak  entanglement conjecture,
it reflects the physical intuition that the classical 2D spin system associated with MLD
has a finite correlation length for error rates below the threshold~\cite{DKLP01}.
Accordingly, one should expect that the classical-to-quantum mapping  cannot generate
highly entangled states since the latter require long-range correlations. Furthermore,
we have  justified the conjecture numerically by applying the MPS-based decoder
to the bit-flip noise~\footnote{It should be emphasized that the MPS-based decoder
is applicable to any noise model that can be described by a stochastic i.i.d. Pauli noise.
In contrast, the decoder based on matchgates is only applicable to noise models
with independent bit-flip and phase-flip errors.}.
 We observed that the logical error probabilities of the exact MLD
and the MPS-based decoder  with a relatively small bond dimension $\chi=6$
are virtually indistinguishable.
Likewise, in the case of the depolarizing noise we observed that the 
logical error probability exhibits a fast convergence as a function of $\chi$ suggesting
that the MPS-based decoder with $\chi=6$ implements nearly exact MLD. 

Finally, we benchmark the exact and the approximate implementations of MLD
against the commonly studied minimum weight matching (MWM) decoder~\cite{DKLP01,Wang2011surface}.
The benchmarking was performed for a fixed code distance $d=25$ and a wide range 
of error rates.  In the case of the bit-flip noise we observed that the MWM decoder
approximates the logical error probability of MLD within a factor of two.
The observed difference between MLD and the MWM decoder can be attributed to the
fact that the latter ignores the error degeneracy~\cite{Stace10}.
Since the observed difference is relatively small, we conclude that ignoring the error degeneracy
does not have a significant impact on the decoder's performance for the studied noise model.
In the case of the depolarizing noise we observed that the MPS-based decoder
is far superior than the MWM decoder offering more than
two orders of magnitude reduction of the logical error probability
even for small values of $\chi$.
This can be attributed to the fact that  the MWM decoder often fails to  find the minimum weight error consistent with the syndrome since it  ignores correlations between $X$ and $Z$ errors~\cite{Fowler13optimal}.

The rest of the paper is organized as follows.
We formally define the maximum likelihood decoder, the studied noise models,
and the surface code in Sections~\ref{sect:mld},\ref{sect:noise}, and~\ref{sect:scodes} respectively. 
Our exact implementation of MLD for the bit-flip noise is described in \Sect{Xnoise}.
The approximate implementation of MLD  based on matrix product states
is presented  in \Sect{XYZnoise}.  
A comparison between the exact MLD, the approximate MLD with various bond
dimensions $\chi$, and the minimum weight matching decoder
is presented in \Sect{numerics} that describes our numerical results. 

\section{Maximum Likelihood Decoder}
\label{sect:mld}

In this section we formally define MLD.  We consider a quantum memory composed 
of $n$ physical qubits. Let $\calH=(\CC^2)^{\otimes n}$ be the full $n$-qubit Hilbert space
and $\calP$ be the group of $n$-qubit Pauli operators.
By definition, any element of $\calP$ has a form $f=c f_1\otimes f_2 \otimes \cdots \otimes f_n$,
where $f_j\in \{I,X,Y,Z\}$ are single-qubit Pauli operators and $c\in \{\pm 1,\pm i\}$ is an overall phase factor. 
 A quantum code of stabilizer
type is defined by an abelian {\em stabilizer group} $\calG\subset \calP$ such that $-I\notin \calG$.
Quantum codewords are $n$-qubit states invariant under the action of any element of $\calG$.
Such states define a {\em codespace} 
\[
\calH_0=\{ \psi \in \calH\, : \, g \psi=\psi \quad \mbox{for all $g\in \calG$}\}.
\]
The encoding step amounts to initializing the memory in some (unknown) state $\rho$
supported on the codespace $\calH_0$.

We shall consider a stochastic Pauli noise described by a linear map
\begin{equation}
\label{eq:final}
\calN(\rho)=\sum_{f\in \calP} \pi(f) \, f\rho f^\dag,
\end{equation}
where $\pi$ is some normalized probability distribution on the Pauli group. 
Since the initial state $\rho$ is supported on the codespace $\calH_0$,
one has $f\rho f^\dag = \rho$ for any $f\in \calG$. By the same token,
$f\rho f^\dag = h\rho h^\dag$ whenever  $f\calG =h\calG$.
Given a Pauli operator $f\in \calP$, a subset $f\calG\equiv \{ fg\, : \, g\in \calG\}$
is called a {\em coset}
of $\calG$.
Clearly, $\calP$ is a disjoint union of cosets
$\calC^\alpha=f^\alpha \calG$, where $f^\alpha$ is some  fixed representative of $\calC^\alpha$.
The above shows that 
errors in the same coset have the same action on the codespace.
Thus
\begin{equation}
\label{eq:final1}
\calN(\rho)=\sum_\alpha \pi(f^\alpha\calG) \cdot f^\alpha \rho f^\alpha,
\end{equation}
where the sum ranges over all cosets of $\calG$ and 
\[
\pi(f\calG)\equiv \sum_{g\in \calG} \pi(fg).
\]
For simplicity, here we assumed that all coset representatives $f^\alpha$
are hermitian operators.
We shall refer to the quantity $\pi(f\calG)$ as a {\em coset probability}.

At  the decoding step  one attempts to guess the coset of the stabilizer group
that contains the actual error based on a partial information about the error known
as a syndrome. More precisely, let $g^{1},\ldots,g^{m}\in \calG$ be some fixed set
of generators of $\calG$. Since the generators $g^{i}$ pairwise commute, they can
be diagonalized simultaneously. A configuration of eigenvalues  $g^{i}=\pm 1$
can be described by a {\em syndrome}
$s\in \{0,1\}^m$ such that $g^{i}=(-1)^{s_i}$ for all $i=1,\ldots,m$. 
Assuming that the generators $g^{i}$ are independent, there are $2^m$ possible syndromes. The full Hilbert space can be decomposed into a
direct sum of syndrome subspaces
\[
\calH=\bigoplus_{s\in \{0,1\}^m}\;  \calH_s,
\]
where $\calH_s=\{\psi\in \calH\, : \, g^{i}\psi=(-1)^{s_i} \psi \quad \mbox{for all $i$}\}$.
Note that the codespace $\calH_0$ corresponds to the zero syndrome. 
A Pauli operator $f\in \calP$ is said to have a syndrome $s$
iff $fg^{i}=(-1)^{s_i} g^{i} f$ for all $i=1,\ldots,m$. Equivalently, $f$ has a syndrome $s$ iff $f\calH_0=\calH_s$.
For each syndrome $s$ let us choose  some fixed Pauli operator 
$f(s)$ with the syndrome $s$.
One can easily check that the set of all Pauli operators with a syndrome $s$ coincides
with the coset $f(s)\calC(\calG)$, where 
\[
\calC(\calG)=\{ f\in \calP\, : \, fg=gf \quad \mbox{for all $g\in \calG$}\}
\]
is a group known as the centralizer of $\calG$.
Note that $\calG\subseteq \calC(\calG)$. Thus each coset of $\calC(\calG)$
can be partitioned into a disjoint union of several cosets of $\calG$. 
In the present paper we only consider stabilizer codes with a single logical qubit.
Let $\overline{X},\overline{Y},\overline{Z}\in \calC(\calG)\backslash\calG$ be the logical Pauli
operators on the encoded qubit. 
Then each coset of $\calC(\calG)$ consists of four disjoint cosets of $\calG$,
namely,
\begin{equation}
\label{eq:CG1}
f(s)\calC(\calG)=\calC_I^s \; \cup \; \calC_X^s \; \cup \; \calC_Y^s \; \cup \; \calC_Z^s,
\end{equation}
where
\begin{equation}
\label{eq:CG2}
\calC_I^s=f(s)\calG, \quad  \calC_X^s=f(s)\overline{X}\calG, 
\end{equation}
\begin{equation}
\label{eq:CG3}
\calC_Y^s=f(s)\overline{Y} \calG, \quad \mbox{and} \quad \calC_Z^s=f(s)\overline{Z} \calG.
\end{equation}

The decoding step starts by a {\em syndrome measurement}
that projects the corrupted state $\calN(\rho)$ onto one of the syndrome subspaces $\calH_s$.
The above arguments show that 
$f\rho f^\dag$ has support on $\calH_s$ 
iff $f\in f(s)\calC(\calG)$. Thus
the syndrome measurement reveals the coset of $\calC(\calG)$ that contains
the error $f$, whereas our goal is to determine which coset of $\calG$ contains $f$.
Using Eqs.~(\ref{eq:final1},\ref{eq:CG1}-\ref{eq:CG3}) one can write 
the post-measurement  (unnormalized) state as
\begin{equation}
\begin{array}{rcl}
\rho(s)&=&\pi(\calC_I^s)\cdot f(s) \rho f(s)\\
&+& \pi(\calC_X^s)\cdot f(s)\overline{X} \rho \overline{X} f(s) \\
&+& \pi(\calC_Y^s)\cdot f(s)\overline{Y} \rho \overline{Y} f(s) \\
&+& \pi(\calC_Z^s)\cdot f(s)\overline{Z} \rho \overline{Z} f(s), \\
\end{array}
\end{equation}
where $s$ is the observed syndrome.
Here we assumed for simplicity that $f(s)$ and the logical operators $\bar{X},\bar{Y},\bar{Z}$
are hermitian. This shows that the effective noise model
conditioned on the syndrome can be described by applying one of the four 
Pauli errors $f(s), f(s)\bar{X}, f(s)\bar{Y}$, and $f(s)\bar{Z}$ with 
probabilities $\pi(\calC_I^s), \pi(\calC_X^s), \pi(\calC_Y^s)$, and
$\pi(\calC_Z^s)$ respectively. Clearly, the best possible error correction
algorithm for this effective noise model is to choose a recovery
operator as  the most likely of the four
errors. Equivalently, we should choose a recovery operator 
as any Pauli operator that belongs to the most likely of
the four cosets $\calC_I^s, \calC_X^s,\calC_Y^s,\calC_Z^s$
which we denote $\calC_{\mathrm{ML}}^s$. 
These steps can be summarized as follows.

\begin{center}
\fbox{\parbox{0.95\linewidth}{
\begin{algorithmic}
\State{\hspace{2.2cm} {\bf ML  Decoder}}
\State{{\bf Input:} syndrome $s\in \{0,1\}^m$}
\State{{\bf Output:} recovery operator $g\in \calP$}
\State{}
\State{$f(s) \gets$  any Pauli operator with a syndrome $s$}
\State{$\calC_{\mathrm{ML}}^s \gets \mathrm{arg} \max_\calC \pi(\calC)$, where $\calC\in \{ \calC_I^s, \calC_X^s,\calC_Y^s,\calC_Z^s\}$}
\State{\Return any $g\in \calC_{\mathrm{ML}}^s$}
\end{algorithmic}
}}
\end{center}
The final step of the decoding is to apply the optimal recovery operator $g$.
It results in a state $g\rho(s)g^\dag$. 
We conclude that MLD correctly identifies the coset of $\calG$
that contains the actual error and  maps the corrupted state $\calN(\rho)$ back to the encoded
state $\rho$ with a probability
\[
P_{success}=\sum_{s\in \{0,1\}^m} \pi(\calC_{\mathrm{ML}}^s).
\]
In what follows we shall always ignore overall phase factors of Pauli operators.
Such phase factors are irrelevant for
our purposes since they do not change the outcome of error correction.

\section{Noise models}
\label{sect:noise}

We  shall consider a stochastic i.i.d. Pauli noise
\[
\calN=\bigotimes_{j=1}^n \calN_j,
\]
where
\[
\calN_j(\rho)=(1-\epsilon) \rho + \epsilon_X X\rho X + \epsilon_Y Y\rho Y + \epsilon_Z Z\rho Z
\]
and $\epsilon\equiv \epsilon_X+\epsilon_Y+\epsilon_Z$ is called an {\em error rate}. 
Two commonly studied noise models
are the classical bit-flip noise where only $X$-type errors are allowed (the X-noise)
and the depolarizing noise where all types of errors are equally likely.
The formal definitions are given below. 
\begin{eqnarray}
\mbox{\bf X-noise} &:&  \quad  \epsilon_X=\epsilon, \quad \epsilon_Y=\epsilon_Z=0, \nonumber \\
\mbox{\bf Depolarizing noise} &:& \quad 
\epsilon_X=\epsilon_Y=\epsilon_Z=\epsilon/3.\nonumber 
\end{eqnarray}
The corresponding probability distributions on the Pauli group are
\[
\pi(f)=(1-\epsilon)^{n-|f|} (\epsilon/3)^{|f|}
\]
for the depolarizing noise and 
\[
\pi(f)=\left\{
\ba{rcl}
(1-\epsilon)^{n-|f|} \epsilon^{|f|} &\mbox{if} & f\in \calP^X, \\
0 && \mbox{otherwise}\\
\ea
\right.
\]
for the X-noise. Here $|f|$ denotes the Hamming weight
of $f$, that is, the number of qubits on which $f$ acts nontrivialy,
while  $\calP^X\subset \calP$ denotes the subgroup generated by single-qubit Pauli $X$ operators.

One may also consider a noise model with independent bit-flip and phase-flip errors,
that is, $\epsilon_X=\epsilon_Z$ and $\epsilon_Y=(\epsilon_X)^2$. Since there are
no correlations between the two types of errors, one can perform error correction 
independently for bit-flip and phase-flip errors. Furthermore, since correcting phase-flip errors
is equivalent to correcting bit-flip errors on the surface code lattice rotated by $90^\circ$,
it suffices to consider the $X$-noise model only.

\section{Surface codes}
\label{sect:scodes}

We consider the surface code on a square lattice of size $d\times d$ with 
open boundary conditions. The boundaries parallel to the horizontal (vertical) axis are smooth (rough).
The surface code lattice with $d=3$ is shown on \Fig{d3}.
For the chosen geometry the surface code encodes one logical qubit
into $n=d^2+(d-1)^2$ physical qubits with the minimum distance $d$. 
We shall always consider odd values of $d$ such that the code corrects 
any combination of $(d-1)/2$ single qubit errors. 
%
%
Let $A_u$ and $B_p$ be the stabilizers of the surface code associated
with a site $u$ and a plaquette $p$ respectively. We have $B_p=\prod_{e\in p} X_e$,
where the product runs over all edges $e$ making up the boundary of $p$.
Likewise, $A_u=\prod_{e\ni u} Z_e$, where the product runs over all edges $e$ incident to $u$. 
Let $\calG^Z=\langle A_u\rangle$ and $\calG^X=\langle B_p\rangle$ be the subgroups of the Pauli group $\calP$ 
generated by all site stabilizers and all plaquette stabilizers respectively. 
Finally, let $\calG=\langle A_u, B_p\rangle$ be the full stabilizer group. 
Logical Pauli operators $\overline{X},\overline{Z}$ are shown on Fig.~\ref{fig:logical},
while  $\overline{Y}=i\overline{X}\overline{Z}$.

\begin{figure}[h]
\centerline{\includegraphics[height=6cm]{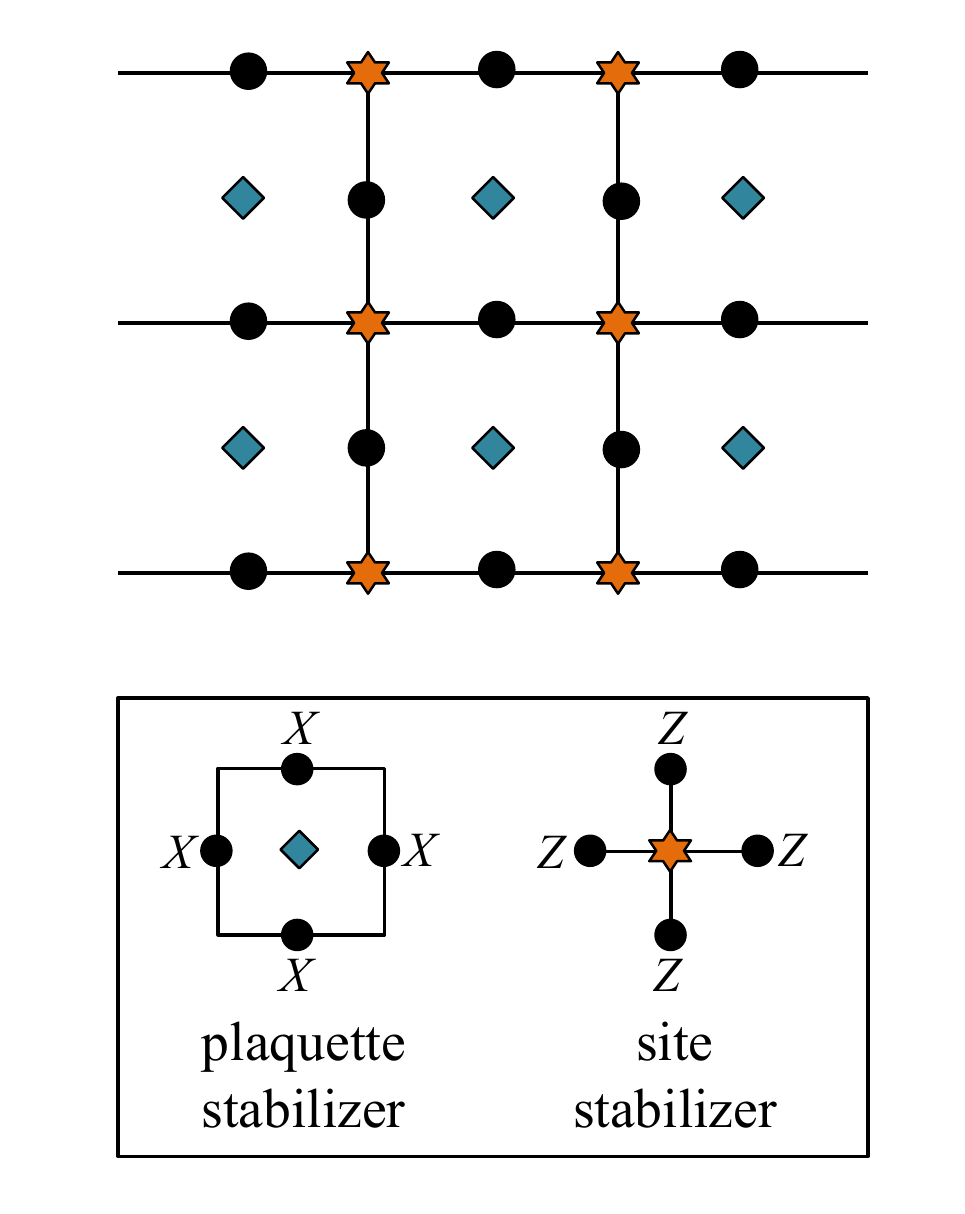}}
\caption{Distance-$3$ surface code.
Solid dots, stars, and diamonds 
indicate locations of qubits,  site stabilizers, and plaquette stabilizers respectively. 
Stabilizers located near the boundary act only on three qubits. 
The distance-$d$ surface code has $d^2$ qubits on horizontal edges, $(d-1)^2$ qubits on vertical edges,
and  $d(d-1)$ stabilizers of each type.
\label{fig:d3}
}
\end{figure}

\begin{figure}[h]
\centerline{\includegraphics[width=7cm]{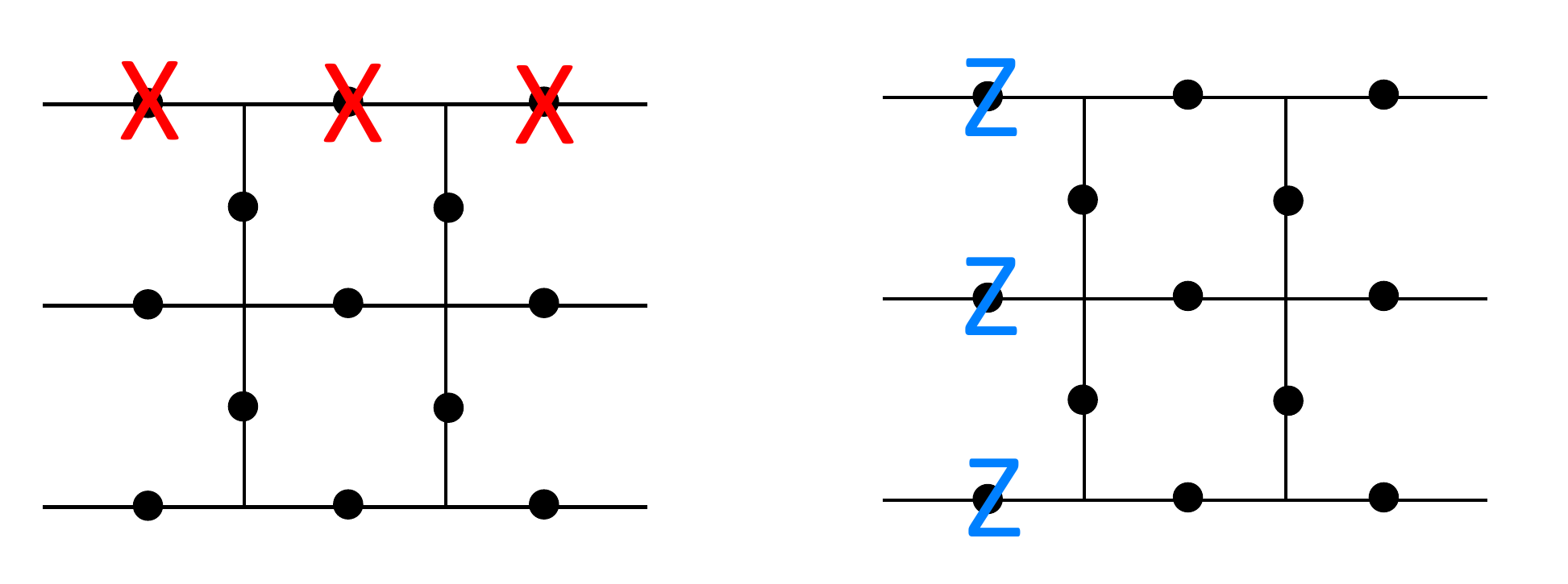}}
\caption{Logical Pauli operators $\overline{X}$ (left) and $\overline{Z}$ (right). 
\label{fig:logical}
}
\end{figure}

By a slight abuse of notations, below we shall often  identify a Pauli operator 
$f$  with the subset of edges in the lattice
on which $f$ acts non-trivially. 

\section{Exact algorithm}
\label{sect:Xnoise}

In this section we consider the X-noise and describe an exact implementation
of MLD. We begin by specializing MLD  to 
the X-noise (\Sect{MLD4X}) and describing our algorithm (\Sect{algorithm1}).
A reader interested only in the question of {\em how} 
the algorithm works can skip the remaining sections
explaining  {\em why} it works and proving its correctness.
 Specifically,  \Sect{reduction}  shows how
to express the coset probability  as  a matrix element of a matchgate quantum circuit.
Our derivation partially follows the one of Refs.~\cite{DKLP01,Merz01}.
An efficient method of simulating matchgate circuits based on fermionic Gaussian states
 is described in \Sect{gaussian}. The material of this section
 mostly follows Ref.~\cite{Bravyi04}.

\subsection{Specializing the ML decoder to X-noise}
\label{sect:MLD4X}

Let $s$ be the input syndrome and $f(s)\in \calP$ be some fixed Pauli error consistent with $s$.
We can always choose $f(s)\in \calP^X$, that is, such that $f(s)$ acts on any qubit by $I$ or $X$. Indeed,
since only $X$-type errors can appear with a non-zero probability,  the syndromes of all plaquette stabilizers must be zero.
Let $s_u$ be the syndrome of a site stabilizer $A_u$. We  choose the desired error $f(s)$ by connecting each site $u$
with a non-zero syndrome $s_u$ to the left boundary by a horizontal string of $X$ errors and adding all such strings
modulo two. Note that $f(s)$ can be constructed in time $O(n)$. 

Let $\pi$ be the probability distribution on the Pauli group describing the X-noise, see \Sect{noise}.
To implement the ML decoder it suffices to compute the four coset probabilities
$\pi(\calC_I^s), \pi(\calC_X^s), \pi(\calC_Y^s)$, and $\pi(\calC_Z^s)$ as defined in \Sect{mld}.
Note that  $\pi(\calC_Y^s)=\pi(\calC_Z^s)=0$ since any element of these two cosets
acts by Pauli $Z$ on at least $d$ qubits. 
Choose any logical operator $\overline{L}\in  \{\overline{I},\overline{X}\}$
and let $f\equiv f(s)\overline{L}$. From now on we shall assume that $f$ is fixed.
Since $Z$-type errors are not allowed, one has $\pi(f\calG)=\pi(f\calG^X)$.
Thus it suffices to compute the coset probability $\pi(f\calG^X)$.

\subsection{Algorithm for computing the coset probability}
\label{sect:algorithm1}


In this section we describe an algorithm that takes as input an $X$-type Pauli operator $f$
and outputs the  coset probability $\pi(f\calG^X)$.
The algorithm has running time $O(n^2)$.

Let us begin by introducing some notations. 
The sets of all horizontal and vertical edges
of the surface code lattice will be denoted $H$ and $V$ respectively. 
For the code of distance $d$ one has $|H|=d^2$ and 
$|V|=(d-1)^2$. We partition the set $H$ into columns of edges
such that 
\[
H=H^1 \cup H^2 \cup \ldots \cup H^d,
\]
where $H^j$ denotes the $j$-th leftmost column of horizontal edges, see \Fig{layers}. 
Edges of every column $H^j$ will be labeled by integers $1,\ldots,d$ 
starting from the top edge. 
Likewise,
\[
V=V^1\cup V^2 \cup \ldots \cup V^{d-1},
\]
where $V^j$ denotes the $j$-th leftmost column of vertical edges, see \Fig{layers}. 
Edges of every column $V^j$ will be labeled by integers $1,\ldots,d-1$ 
starting from the top edge. 
We shall refer to $H^j$ and $V^j$ as horizontal and vertical columns respectively. 

\begin{figure}[h]
\centerline{\includegraphics[height=5cm]{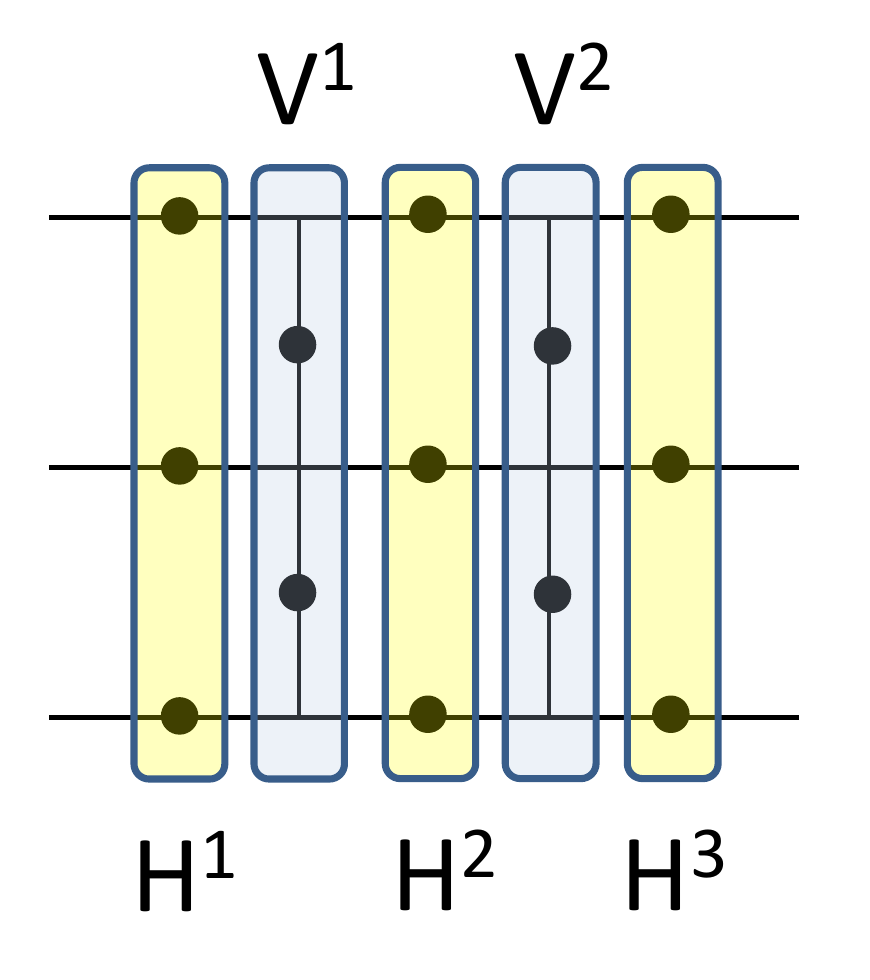}}
\caption{Partition of edges into `horizontal' columns  $H^1,\ldots,H^d$
and vertical columns $V^1,\ldots,V^{d-1}$. 
Every edge is identified with the respective code qubit (solid dot). 
\label{fig:layers}
}
\end{figure}

For each edge $e$ of the surface code lattice define a weight
\begin{equation}
\label{eq:wj}
w_e=\left\{ \ba{rcl}
\epsilon(1-\epsilon)^{-1} &\mbox{if} & e\notin f, \\
\epsilon^{-1}(1-\epsilon) &\mbox{if} & e\in f. \\
\ea
\right.
\end{equation}
Recall that $\epsilon$ is the error rate. 

For any integer $m\ge 1$ and a vector $\lambda\in \RR^m$
let $\calA(\lambda)$ be the anti-symmetric matrix of size $(m+1)\times (m+1)$
that contains $\lambda$ above the main diagonal and
$-\lambda$ below the main diagonal. For example, if $\lambda=(\lambda_1,\lambda_2,\lambda_3)$ then 
\[
\calA(\lambda)=\left[ \ba{cccc}
0 & \lambda_1 & 0 & 0 \\
-\lambda_1 & 0 & \lambda_2 & 0\\
0 & -\lambda_2 & 0 & \lambda_3 \\
0 & 0 & -\lambda_3 & 0 \\
\ea
\right].
\]
Let $\calD(\lambda)$ be the diagonal matrix of size $m\times m$ that contains $\lambda$
on the main diagonal. Define also a standard antisymmetric matrix
\begin{equation}
\label{eq:M0}
M_0=
\left[
\ba{ccccc}
0 &   &  &  &  1 \\
& {\left[ \ba{cc}  0 & 1\\ -1 & 0 \\ \ea \right]} & & & \\
& & \ddots & & \\
& & & {\left[ \ba{cc}  0 & 1\\ -1 & 0 \\ \ea \right]}  & \\
-1 & & & & 0 \\
\ea\right]
\end{equation}
such that $M_0$ has size $2d\times 2d$. The matrix $M_0$
contains $d-1$ blocks of size $2\times 2$ on the main diagonal
and two non-zero elements $M_{1,2d}=1=-M_{2d,1}$. 
All remaining elements of $M_0$ are zero. 
Let $I$ be the identity matrix of size $2d\times 2d$.

The first step of our algorithm  is to compute the
probability  of the input error  $\pi(f)=(1-\epsilon)^{n-|f|} \epsilon^{|f|}$
and the coefficients $w_e$ defined in \Eq{wj}. 
This step takes time $O(n)$. 
At each subsequent step  of the algorithm we maintain a pair $(M,\Gamma)$,
where $M$ is an antisymmetric real matrix of size $2d\times 2d$
and $\Gamma\ge 0$ is a real number. 
The algorithm calls two functions SimulateHorizontal$(j,M,\Gamma)$ and SimulateVertical$(j,M,\Gamma)$
that update  the pair  $(M,\Gamma)$ by applying a simple combination of matrix inversions and matrix multiplications. 

\begin{center}
\fbox{\parbox{0.9\linewidth}{
\begin{algorithmic}
\State{\hspace{2.2cm} {\bf Algorithm~1}}
\State{{\bf Input:} $X$-type Pauli operator $f$}
\State{{\bf Output:} Coset probability $\pi(f\calG^X)$}
\State{}
\State{Compute the coefficients $w_e$ defined in \Eq{wj}}
\State{$\pi(f)\gets (1-\epsilon)^{n-|f|} \epsilon^{|f|}$}
\State{$M\gets M_0$}
\State{$\Gamma\gets 2^{d-1}$}
\For{$j=1$ to $d-1$}
\State{\Call{SimulateHorizontal}{$j,M,\Gamma$}}
\State{\Call{SimulateVertical}{$j,M,\Gamma$}}
\EndFor
\State{\Call{SimulateHorizontal}{$d,M,\Gamma$}}
\State{\Return $\pi(f) \sqrt{\Gamma/2}  \cdot \det{(M+M_0)}^{1/4}$}
\end{algorithmic}
}}
\end{center}

\begin{center}
\fbox{\parbox{0.9\linewidth}{
\begin{algorithmic}
\Function{SimulateHorizontal}{$j,M,\Gamma$} 
\For{$i=1$ to $d$}
\State{$e\gets \mbox{$i$-th edge of the column $H^j$}$}
\State{$\Gamma \gets \Gamma \cdot (1+w_e^2)/2$}
\State{$t_i\gets (1-w_e^2)/(1+w_e^2)$}
\State{$s_i\gets 2w_e/(1+w_e^2)$}
\EndFor
\State{$A\gets \calA(t_1 0 t_2 0\ldots t_{d-1}0t_d)$}
\State{$B\gets \calD(s_1 s_1 s_2 s_2 \ldots s_d s_d)$}
\State{$\Gamma\gets \Gamma\cdot \sqrt{\det{(M+A)}}$}
\State{$M\gets A-B(M+A)^{-1} B$}
 \EndFunction
\end{algorithmic}
}}
\end{center}

\begin{center}
\fbox{\parbox{0.9\linewidth}{
\begin{algorithmic}
\Function{SimulateVertical}{$j,M,\Gamma$} 
\For{$i=1$ to $d-1$}
\State{$e\gets \mbox{$i$-th edge of the column $V^j$}$}
\State{$\Gamma \gets \Gamma \cdot (1+w_e^2)$}
\State{$t_i\gets 2w_e/(w_e^2+1)$}
\State{$s_i\gets (1-w_e^2)/(1+w_e^2)$}
\EndFor
\State{$A\gets \calA(0t_1 0 t_2 \ldots 0 t_{d-1} 0)$}
\State{$B\gets \calD(1 s_1 s_1 s_2 s_2 \ldots s_{d-1} s_{d-1} 1)$}
\State{$\Gamma\gets \Gamma\cdot \sqrt{\det{(M+A)}}$}
\State{$M\gets A-B(M+A)^{-1} B$}
 \EndFunction
\end{algorithmic}
}}
\end{center}

If implemented naively, each matrix inversion and each matrix multiplication
takes time $O(d^3)$. Likewise, computing each determinant takes time
$O(d^3)$. Simple counting then shows that the overall running time of the  algorithm 
is  $O(d^4)=O(n^2)$. Suggestions on improving  stability
of the algorithm against rounding errors can be found in \Sect{numerics}.

\subsection{Reduction to a matchgate quantum circuit}
\label{sect:reduction}
 
Consider any stabilizer $g\in \calG^X$. A simple algebra shows that
\[
\pi(fg)=\pi(f) \prod_{e\in g} w_e,
\]
where $w_e$ are the weights defined in \Eq{wj}. Thus
\[
\pi(f\calG^X)=\pi(f)\calZ(w),
\]
where $w=\{w_e\}$ is the list of coefficients $w_e$ and 
\begin{equation}
\label{Zloops}
\calZ(w)=\sum_{g\in \calG^X} \; \prod_{e\in g} w_e.
\end{equation}

Since the factor $\pi(f)$ is easy to compute, below we concentrate on 
computing  $\calZ(w)$. We shall express $\calZ(w)$
as a matrix element of a certain quantum circuit acting on $d$ qubits.
The circuit will be composed of single-qubit and two-qubit gates 
\begin{equation}
\label{eq:gates}
G(w)  \equiv \left[ \ba{cc}
1 & 0\\
0 & w \\
\ea
\right] \quad \mbox{and}\quad 
G'(w)\equiv  \left[ \ba{cccc}
1 & 0 & 0 & w \\
0 & 1 & w & 0 \\
0 & w & 1 & 0\\
w & 0 & 0 & 1\\
\ea
\right]
\end{equation}
where $w$ is a real parameter. We note that $G(w)$ and $G'(w)$ are not unitary gates. 
Let $\calH_d=(\CC^2)^{\otimes d}$ be the Hilbert space of $d$ qubits.
For each horizontal column $H^j$ and each vertical column $V^j$ defined 
at \Fig{layers} define linear operators $\hat{H}^j,\hat{V}^j$ acting on $\calH_d$ such that
\begin{equation}
\label{eq:hatH}
\hat{H}^j = G(w_{e_1})\otimes \cdots \otimes G(w_{e_d})
\end{equation}
and
\begin{equation}
\label{eq:hatV}
\hat{V}^j=
G_{12}'(w_{e_1}) G_{23}'(w_{e_2})  \cdots G_{d-1,d}'(w_{e_{d-1}}).
\end{equation}
Here the subscripts indicate the qubits acted upon by each gate and 
$e_i$ denotes the $i$-th  edge of the respective columns
$H^j$ and $V^j$ counting from the top to the bottom.
Finally, define a state 
\begin{equation}
\label{eq:psi}
|\psi_e\rangle=\sum_{x\in \{0,1\}^d_{even}}\; |x\rangle,
\end{equation}
where $\{0,1\}^d_{even}$ is the set of all $d$-bit binary strings with the even
Hamming weight. 
\begin{lemma}
\label{lem:mgates}
One has
\begin{equation}
\label{eq:Zw1}
\calZ(w)=\langle \psi_e | \hat{U} |\psi_e\rangle, 
\end{equation}
where
\begin{equation}
\label{eq:hatU}
\hat{U}=\hat{H}^d \hat{V}^{d-1} \cdots   \cdots\hat{H}^2  \hat{V}^{1} \hat{H}^1
\end{equation}
is a quantum circuit on $d$ qubits shown at \Fig{mgates}. 
\end{lemma}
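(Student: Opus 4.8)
The plan is to prove \Lem{mgates} by expanding the right-hand side $\langle\psi_e|\hat U|\psi_e\rangle$ as a sum over ``histories'' and matching it term by term with the defining sum $\calZ(w)=\sum_{g\in\calG^X}\prod_{e\in g}w_e$ of Eq.~(\ref{Zloops}). The starting observation is that the two gates have transparent combinatorial meaning. The single-qubit gate $G(w_e)=\mathrm{diag}(1,w_e)$ is diagonal, so along any computational-basis path it simply contributes a factor $w_e$ precisely when the corresponding qubit is in state $|1\rangle$. The two-qubit gate is $G'(w_e)=I+w_e\,(X\otimes X)$, so it branches into two terms: the identity (weight $1$, no change) and $X\otimes X$ (weight $w_e$, flipping both adjacent qubits). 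Applying $\hat U$ from \Eq{hatU} to the right-hand $|\psi_e\rangle$ thus processes the columns in the spatial order $H^1,V^1,H^2,\ldots$ of \Fig{layers}, carrying $d$ ``track'' bits $a_1,\dots,a_d$ (one per row), acting with $G$ on the horizontal edges of each $H^j$ and with $G'$ on the vertical edges of each $V^j$.

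First I would perform the path-sum expansion. Each vertical gate acting on the $i$-th edge of $V^j$ is assigned a branch variable $\chi_{V^j_i}\in\{0,1\}$ (identity vs.\ flip), and the track bit of row $i$ while $\hat H^j$ acts is denoted $a_i^{(j)}$. Because $G$ is diagonal, $a_i^{(j)}$ equals the horizontal occupation $\chi_{H^j_i}$, while the $X$-flips of $G'$ commute and give the update $a_i^{(j+1)}=a_i^{(j)}\oplus\chi_{V^j_{i-1}}\oplus\chi_{V^j_i}$ (with the out-of-range flip variables set to zero). The accumulated weight of a history is then exactly $\prod_e w_e^{\chi_e}=\prod_{e\in g}w_e$, where $g$ is the edge set with occupation vector $\chi$. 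Hence the right-hand side becomes $\sum_{\mathrm{admissible}\ \chi}\prod_{e\in g}w_e$, and it remains to show that the admissible $\chi$ are exactly the supports of the elements $g\in\calG^X$.

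The heart of the argument is identifying the constraints cut out by the circuit. Rewriting the update rule as $\chi_{H^j_i}\oplus\chi_{H^{j+1}_i}\oplus\chi_{V^j_{i-1}}\oplus\chi_{V^j_i}=0$ shows that around every interior site $u$ the number of occupied incident edges is even; this is precisely the condition that $g$ has even overlap with $A_u=\prod_{e\ni u}Z_e$, i.e.\ that $g$ commutes with every site stabilizer, so $g$ is an $X$-type \emph{cycle}. The cycle space is one dimension larger than $\calG^X$, the extra generator being the logical operator $\overline{X}$, so a single further constraint is needed. This is supplied by the boundary vectors: forcing the initial bits $x=a^{(1)}$ into the even-weight support of $|\psi_e\rangle$ imposes $\bigoplus_i\chi_{H^1_i}=0$, which is exactly commutation of $g$ with a logical-$Z$ representative supported on the first column. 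This holds for boundaries but fails for $\overline{X}$, whose support is a full row of horizontal edges and hence meets the first column in a single edge. (Since each vertical flip toggles two track bits, the total parity $\bigoplus_i a_i^{(j)}$ is conserved across columns, so the even-ness of $a^{(d)}$ is automatic and the left-hand $\langle\psi_e|$ imposes no further restriction.) Thus the admissible $\chi$ are exactly the homologically trivial $X$-cycles, namely the supports of $g\in\calG^X$, and the weights agree, proving Eq.~(\ref{eq:Zw1}).

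I would close with a counting cross-check to guard against an off-by-one in the boundary bookkeeping: the free data are the even initial string $x$ ($2^{d-1}$ choices) and the flip pattern $\chi_V$ ($2^{(d-1)^2}$ choices), giving $2^{(d-1)+(d-1)^2}=2^{d(d-1)}=|\calG^X|$ histories, matching the $d(d-1)$ independent plaquette generators; together with injectivity of $\chi\mapsto g$ this promotes the weight match to a genuine bijection. The main obstacle I anticipate is precisely this boundary-condition bookkeeping: verifying that the interior vertex relations degenerate correctly at the rough and smooth boundaries (so that the circuit counts closed cycles and not stray open strings), and confirming that the even-weight projector---rather than some other linear functional---is exactly what quotients the cycle space by $\overline{X}$ to land on $\calG^X$. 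The bulk weight-matching, by contrast, is a routine consequence of the explicit forms of $G$ and $G'$.
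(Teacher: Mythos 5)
Your proof is correct and establishes the same underlying bijection as the paper's, but it runs in the opposite direction and hinges on a different sub-lemma. The paper starts from $\calZ(w)$: its Proposition~\ref{prop:trivial} shows that each $g\in\calG^X$ is uniquely reconstructed, row by row via the cycle conditions, from its horizontal trace $T=g\cap H$ (one even-weight $d$-bit string per column); each column factor of $\prod_{e\in g}w_e$ is then recognized as a matrix element of $\hat H^j$ or $\hat V^j$, and resolutions of the identity over even strings are inserted between columns, with parity preservation of the gates supplying the final identification with $\langle\psi_e|\hat U|\psi_e\rangle$. You instead start from the circuit side, parametrize the path sum by the pair $\bigl(g\cap H^1,\,g\cap V\bigr)$ rather than by the full horizontal trace, and must then separately identify the admissible configurations with $\calG^X$; this is where your argument genuinely diverges, replacing the paper's direct reconstruction with a homological one --- the cycle space splits as $\calG^X\cup\overline{X}\calG^X$, the column-intersection parity is the conserved homology class, and $|\psi_e\rangle$ is precisely the functional that kills the $\overline{X}$ class --- closed off by the count $2^{(d-1)+(d-1)^2}=2^{d(d-1)}=\abs{\calG^X}$ together with injectivity of $\chi\mapsto g$. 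Both routes rest on the same two facts (the combinatorial reading of $G$ and $G'$, and parity preservation making the left $\langle\psi_e|$ vacuous), and your boundary checks (the degenerate cycle conditions at the top and bottom rows, the evenness of $|B_p\cap H^j|$) are exactly the ones needed. What the paper's version buys is that existence and uniqueness come in one stroke from Proposition~\ref{prop:trivial} with no appeal to dimension counting; what yours buys is a more transparent physical picture (the circuit enumerates homologically trivial cycles) and the explicit cardinality cross-check that guards against boundary miscounts.
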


\begin{figure}[h]
\centerline{\includegraphics[height=6cm]{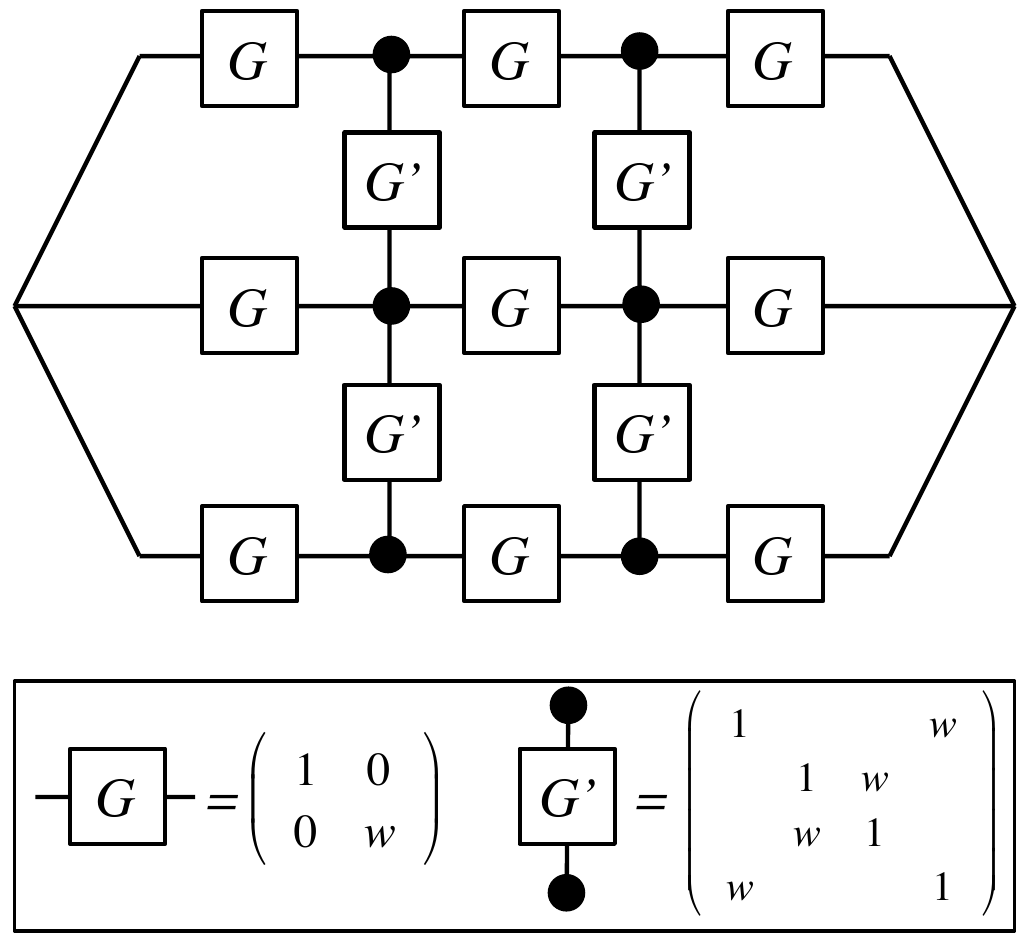}}
\caption{Computing the coset probabilities for the $X$-noise model
is equivalent to computing the matrix element $\langle \psi_e|\hat{U}|\psi_e\rangle$,
where $\hat{U}$ is a quantum circuit on $d$ qubits shown above  and $\psi_e$ is the superposition
of all even-weight $d$-bit strings. The above example is  for $d=3$. 
Each gate depends on a parameter $w_e$ defined in \Eq{wj}.
\label{fig:mgates}
}
\end{figure}
The gates $G(w)$ and $G'(w)$ defined in \Eq{gates} 
are examples of the so-called matchgates
discovered by Valiant~\cite{Valiant02}. It was shown in~\cite{Valiant02} that quantum circuits composed
of matchgates can be efficiently simulated by classical means.
In the next section we describe an alternative  algorithm
for computing the quantity $\langle \psi_e | \hat{U} |\psi_e\rangle$ 
based on fermionic Gaussian states 
with a running time $O(n^2)$. 
(For comparison, the original algorithm
of Ref.~\cite{Valiant02} would have running  time 
$O(n^3)$ since it requires computing the Pfaffian of a matrix of size $O(n)$.)

In the rest of this section we prove \Lem{mgates}. 
\begin{proof}
\begin{prop}
\label{prop:trivial}
For any  subset $T\subseteq H$ such that $|T\cap H^j|$ is even
for all $j=1,\ldots,d$ there
exists a unique $g\in \calG^X$ such that $g\cap H=T$.
\end{prop}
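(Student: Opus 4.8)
The plan is to make the restriction map $\phi\colon \calG^X \to 2^H$, $g\mapsto g\cap H$, completely explicit and to show it is a bijection onto the stated target set by reducing everything to one elementary linear map over $\FF_2$. First I would record the precise combinatorial form of the plaquette stabilizers in the layered geometry of \Fig{layers}. Writing $h_i^j$ and $v_i^j$ for the $i$-th edges of the columns $H^j$ and $V^j$, each plaquette sits in a single horizontal column $H^j$ and its boundary consists of the two vertically adjacent horizontal edges $h_i^j,h_{i+1}^j$ together with the flanking vertical edges $v_i^{j-1},v_i^j$ (one of which is dropped, giving a weight-three stabilizer, when $H^j$ is the leftmost or rightmost column). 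The decisive feature is that the horizontal part of a plaquette supported in $H^j$ is always the clean pair $\{h_i^j,h_{i+1}^j\}\subseteq H^j$, regardless of whether the plaquette is interior or on the boundary.

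Because the horizontal support of each generator lies in a single column and is a pair of adjacent edges, I can analyze $\phi$ column by column. Identifying $X$-type Paulis with vectors of $\FF_2^{H\cup V}$, the horizontal footprint in column $H^j$ of a product $g=\prod_{p\in S}B_p$ depends only on the set $S_j$ of chosen plaquettes lying in that column, and equals $\partial(b)$, where $b\in\FF_2^{d-1}$ is the indicator of $S_j$ and $\partial\colon\FF_2^{d-1}\to\FF_2^{d}$ is the adjacent-sum map $\partial(b)_k=b_{k-1}+b_k$ (with the conventions $b_0=b_d=0$). Hence the composite map $\Phi\colon S\mapsto g\cap H$ splits as a direct sum of $d$ independent copies of $\partial$, one per column, even though the vertical edges themselves are shared between neighbouring columns.

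The heart of the argument is then the single observation that $\partial$ is a bijection from $\FF_2^{d-1}$ onto the even-weight subspace of $\FF_2^{d}$. Injectivity is the telescoping computation $\partial(b)=0\Rightarrow b_1=0\Rightarrow b_2=0\Rightarrow\cdots$; every image vector has even weight because each $\partial(e_i)=e_i+e_{i+1}$ does; and a dimension count ($d-1$ on both sides) forces the image to be all of the even-weight subspace. Applying this in each column simultaneously shows that $\Phi$ is a bijection from the plaquette subsets $S$ onto exactly the $T\subseteq H$ with $|T\cap H^j|$ even for all $j$. Since $\Phi=\phi\circ(S\mapsto\prod_{p\in S}B_p)$, injectivity of $\Phi$ forces both that the plaquettes are independent (so $S\mapsto\prod_{p\in S}B_p$ is a bijection onto $\calG^X$) and that $\phi$ itself is a bijection onto the target; existence of $g$ is its surjectivity and uniqueness is its injectivity.

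I expect the only real obstacle to be bookkeeping rather than mathematics, namely pinning down the plaquette incidences at the rough left and right boundaries and checking that the boundary plaquettes still contribute the pair $\{h_i^j,h_{i+1}^j\}$ to the horizontal footprint (they do, since a boundary only removes a vertical edge). Once the per-column decomposition into $\partial$ is in place, uniqueness, existence, and the independence of the plaquettes all drop out of the bijectivity of that one map.
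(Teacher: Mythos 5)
Your proof is correct, but it takes a genuinely different route from the paper's. The paper argues in the cycle space: it shows that any $T\subseteq H$ extends to a \emph{unique} even subgraph (cycle) by propagating the site-parity condition down each vertical column $V^j$ to determine the vertical edges one by one, and then identifies $\calG^X$ with the set of cycles having even intersection with every $H^j$ (a step it asserts rather than proves in detail). You instead work directly with the plaquette generators: since each $B_p$ has horizontal footprint $\{h_i^j,h_{i+1}^j\}$ inside a single column, the map $S\mapsto \bigl(\prod_{p\in S}B_p\bigr)\cap H$ splits into $d$ independent copies of the adjacent-sum map $\partial$ on $\FF_2^{d-1}$, and everything reduces to the elementary fact that $\partial$ is a bijection onto the even-weight subspace of $\FF_2^d$. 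Your version is more self-contained about the group structure --- it yields the linear independence of the plaquette stabilizers as a byproduct, which the paper's identification of $\calG^X$ with a subclass of cycles implicitly presupposes. What the paper's route buys in exchange is an explicit reconstruction of the vertical part $g(T)\cap V^j$ from the pair $T^j,T^{j+1}$, and this reconstruction is reused verbatim later in the proof of \Lem{mgates} to verify \Eq{Zloops4}; your argument, which never touches the vertical edges, would need to be supplemented by essentially the paper's propagation step at that point.
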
 
\begin{proof}
Recall that a subset of edges $g$ is called a cycle iff any site
has even number of incident edges from $g$.
Let us first show that for any $T\subseteq H$ there exists exactly one
cycle $g$ such that $g\cap H=T$. 
Indeed, consider any vertical column $V^j$. It comprises
a set of sites $u_1,\ldots,u_d$ and a set of edges $e_1,\ldots,e_{d-1}$ (listed in the order
from the top to the bottom). Since $g\cap H^1=T\cap H^1$ and $g\cap H^2=T\cap H^2$,
the cycle condition at $u_1$ uniquely determines $g_{e_1}$. 
Once $g_{e_1}$ is determined, the cycle condition  $u_2$
uniquely determines $g_{e_2}$.  Continuing in this fashion uniquely  determines
$g\cap V^j$. Since $V^j$ can be any vertical column, we conclude that $g$
is uniquely determined by $T$. 
It remains to note that  $\calG^X$ coincides with the set of cycles that 
 have even intersection with any column $H^j$.
\end{proof}
Let $g(T)\in \calG^X$ be the Pauli operator constructed in \Prop{trivial}.
Then
\begin{equation}
\label{eq:Zloops2}
\calZ(w)=\sum_{T\subseteq H} \; \prod_{e\in g(T)} w_e,
\end{equation}
where the sum ranges over all subsets $T$ such that $|T\cap H^j|$ is even for all $j$.
Let $T^j\equiv T\cap H^j$. We can regard $T^j$ as a binary $d$-bit string
such that $T^j_i=1$ iff the $i$-th edge of $H^j$ belongs to $T$. 
Let  $|T^j\rangle\in \calH_d$ be the basis vector corresponding to $T^j$. 
Since $g(T)\cap H^j=T^j$, we have 
\begin{equation}
\label{eq:Zloops3}
\prod_{e\in g(T)\cap H^j} w_e = \langle T^j| G(w_{e_1})\otimes \cdots \otimes G(w_{e_d})|T^j\rangle,
\end{equation}
where $e_1,\ldots,e_d$ are the edges comprising the column $H^j$ listed in the order from the top
to the bottom and $G(w)$ is the single-qubit gate defined in \Eq{gates}. 

Consider now some vertical column $V^j$. Let
$e_1,\ldots,e_{d-1}$ be the edges comprising $V^j$ listed in the order from the top
to the bottom.
 We claim that 
\begin{eqnarray}
\label{eq:Zloops4}
\prod_{e\in g(T)\cap V^j} w_e &=& \langle T^j| G_{12}'(w_{e_1}) G_{23}'(w_{e_2}) \cdots \nonumber \\
&& \cdots G_{d-1,d}'(w_{e_{d-1}})|T^{j+1}\rangle,
\end{eqnarray}
where $G'(w)$ is the two-qubit gate defined in \Eq{gates} and $G_{i,i+1}'(w)\, : \, \calH_d\to \calH_d$ denotes the gate $G'(w)$ applied to the 
pair of qubits $i,i+1$. One can easily check \Eq{Zloops4} by
noting that $G'(w)=I\otimes I + wX\otimes X$ and
following the arguments given in proof of \Prop{trivial}
to reconstruct $g(T)\cap V^j$ from $T^j$ and $T^{j+1}$. 

Let $\{0,1\}^d_{\mathrm{even}}$ be the set of all $d$-bit strings
with even Hamming weight. Combining Eqs.~(\ref{eq:Zloops2},\ref{eq:Zloops3},\ref{eq:Zloops4})
one arrives at
\begin{eqnarray}
\label{eq:Zloops5}
\calZ(w)&=&\sum_{T^1,\ldots,T^d\in \{0,1\}^d_{\mathrm{even}} } \; \; 
\langle T^d|\hat{H}^d |T^d\rangle \langle T^d|\hat{V}^{d-1}|T^{d-1}\rangle \cdots \nonumber  \\
&& \cdots \langle T^{2} |\hat{V}^{1} |T^1\rangle \langle T^1|\hat{H}^1 |T^1\rangle.
\end{eqnarray}
where $\hat{H}^j$ and $\hat{V}^j$ are the linear operators on $\calH_d$
defined in Eqs.~(\ref{eq:hatH},\ref{eq:hatV}). 
 
Let $\calH_d^{even}\subseteq \calH_d$ be the subspace spanned by vectors
$|x\rangle$ with $x\in \{0,1\}^d_{\mathrm{even}}$. Note that the operators
$\hat{H}^j$ and $\hat{V}^j$ preserve $\calH_d^{even}$ since the gates
$G(w)$ and $G'(w)$ preserve the Hamming weight modulo two. 
The above observations imply that $\calZ(w)=\langle \psi_e | \hat{U} |\psi_e\rangle$,
which completes the proof of \Lem{mgates}.
\end{proof}

\subsection{Fermionic Gaussian states}
\label{sect:gaussian}

Let $\calH_d$ be the Hilbert space of $d$ qubits. 
For each $p=1,\ldots,2d$ define a {\em Majorana operator} $\hat{c}_p$ acting on $\calH_d$
such that 
\begin{equation}
\label{eq:majorana}
\hat{c}_{2j-1}=Z_1\cdots Z_{j-1} X_j \quad \mbox{and} \quad \hat{c}_{2j}=Z_1\cdots Z_{j-1} Y_j.
\end{equation}
The Majorana operators  obey the well-known commutation rules
\begin{equation}
\label{eq:CR}
\hat{c}_p \hat{c}_q + \hat{c}_q \hat{c}_p = 2I\delta_{p,q}, \quad \hat{c}_p^2=I, \quad \hat{c}_p^\dag =\hat{c}_p.
\end{equation}
We shall often use a formula
\begin{equation}
\label{eq:useful}
Z_j=(-i)\hat{c}_{2j-1}\hat{c}_{2j} \quad \mbox{and} \quad X_j X_{j+1} =(-i) \hat{c}_{2j}\hat{c}_{2j+1}.
\end{equation}

A  {\em covariance matrix} of a pure (unnormalized) state $\psi\in \calH_d$ is a $2d\times 2d$
matrix $M$ with matrix elements
\begin{equation}
\label{eq:M}
M_{p,q}=\frac{(-i)}{2 \langle \psi|\psi\rangle} \langle \psi| \hat{c}_p \hat{c}_q - \hat{c}_q \hat{c}_p |\psi\rangle.
\end{equation}
From \Eq{CR} one can easily check that $M$ is a real anti-symmetric matrix.

Consider as an example the state $\psi_e$ defined in \Eq{psi}.
Let us compute its covariance matrix $M$.  One can easily check that $\psi_e$
is a stabilizer state with the stabilizer group 
\[
\calG(\psi_e)=\langle X_1 X_2, X_2 X_3, \ldots, X_{d-1}X_d,  Z_1 Z_2 \cdots Z_d\rangle.
\]
Applying \Eq{useful} one can get an alternative set of generators 
that are quadratic in Majorana operators, 
\begin{equation}
\label{eq:gens}
\calG(\psi_e)=\langle (-i) \hat{c}_{2} \hat{c}_{3}, \ldots,  (-i)\hat{c}_{2d-2} \hat{c}_{2d-1}, 
(-i)\hat{c}_1\hat{c}_{2d}\rangle.
\end{equation}
This shows that $M_{2j,2j+1}=1$ for all $j=1,\ldots,d-1$ and $M_{1,2d}=1$. 
Furthermore, $\langle \psi_e|\hat{c}_p \hat{c}_q|\psi_e\rangle=0$
whenever $\hat{c}_p \hat{c}_q$ anti-commutes with at least one of the
generators defined in \Eq{gens}.
Combining the above observations one can easily check that $M=M_0$,
where $M_0$ is the standard anti-symmetric matrix defined in \Eq{M0}.

A state $\psi\in \calH_d$ is said to obey the Wick's theorem
iff the expectation value of any even tuple of Majorana operators
on $\psi$ can be computed from its covariance matrix $M$ using the formula
\begin{equation}
\label{eq:wick}
\langle \psi|i^m \hat{c}_{p_1} \hat{c}_{p_2} \cdots \hat{c}_{p_{2m}} |\psi\rangle
=\Gamma \cdot \mathrm{Pf}(M|_{p_1,p_2,\ldots,p_{2m}}),
\end{equation}
where  $\Gamma=\langle \psi|\psi\rangle$ is the norm of $\psi$,
$M|_{p_1,p_2,\ldots,p_{2m}}$ is the $2m\times 2m$ submatrix of $M$
formed by the rows and columns $p_1,p_2,\ldots,p_{2m}$, and $\mathrm{Pf}$ is 
the Pfaffian~\cite{Valiant02}. Recall that the Pfaffian of an anti-symmetric
matrix $K$  of size $2m\times 2m$ is defined as 
\[
\mathrm{Pf}(K)=\frac1{2^m m!} \, \calA(K_{1,2} K_{3,4} \cdots K_{2m-1,2m}),
\]
where $\calA$ stands for the anti-symmetrization over all $(2m)!$ permutations of indexes. 
For example, $\mathrm{Pf}(K)=K_{1,2}$ for $m=1$ and
\[
\mathrm{Pf}(K)=K_{1,2}K_{3,4}- K_{1,3} K_{2,4} + K_{1,2} K_{3,4}
\]
for $m=2$. 

A state $\psi\in \calH_d$ is called a (fermionic) {\em Gaussian state} 
iff it obeys the Wick's theorem and, in addition, all odd tuples of Majorana
operators have zero expectation value on $\psi$.
By definition, a Gaussian state $\psi$ is fully specified by the pair $(M,\Gamma)$,
where $M$ is the covariance matrix of $\psi$ and $\Gamma=\langle\psi|\psi\rangle$ is the norm.
Below we shall identify a Gaussian state 
and the corresponding pair  $(M,\Gamma)$.

We shall need the following well-known facts, see for instance Ref.~\cite{Bravyi04}.
\begin{fact}
A state $\psi$ is Gaussian iff its 
covariance matrix  obeys $MM^T=I$.
\end{fact}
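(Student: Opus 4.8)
The statement is equivalent to $M^2=-I$, since $M$ is real and antisymmetric so that $MM\tp=-M^2$. The plan is to first put $M$ into a standard form and then treat the two implications separately. Any real antisymmetric matrix admits an orthogonal $O\in O(2d)$ with $OMO\tp=\bigoplus_{j=1}^d\smx{0 & \mu_j \\ -\mu_j & 0}$ and $\mu_j\ge 0$. Setting $\hat{b}_p=\sum_q O_{pq}\hat{c}_q$ gives a new family of Majorana operators (they obey \Eq{CR} precisely because $O$ is orthogonal), and $OMO\tp$ is the covariance matrix of $\psi$ computed from the $\hat{b}_p$. Introducing the mode operators $a_j=\tfrac12(\hat{b}_{2j-1}+i\hat{b}_{2j})$, a one-line computation using \Eq{CR} gives $a_j^\dag a_j=\tfrac12(I+i\hat{b}_{2j-1}\hat{b}_{2j})$ and hence $\langle\psi|a_j^\dag a_j|\psi\rangle=\tfrac{\Gamma}{2}(1-\mu_j)$. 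Since $a_j^\dag a_j\ge 0$ and $a_j a_j^\dag=I-a_j^\dag a_j\ge 0$, this already forces the universal bound $0\le\mu_j\le 1$, valid for \emph{any} state, Gaussian or not.

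For the direction $MM\tp=I\Rightarrow$ Gaussian, I would argue that $\psi$ must be a Fock vacuum. If $\mu_j=1$ for all $j$ then $\langle\psi|a_j^\dag a_j|\psi\rangle=\tfrac{\Gamma}{2}(1-1)=0$, i.e.\ $\|a_j\psi\|^2=0$, so $a_j\psi=0$ for every $j$. Because the $a_j,a_j^\dag$ generate the full operator algebra on $\calH_d$, their common kernel is one dimensional, so $\psi$ is proportional to the vacuum of the modes $a_j$. That vacuum is a stabilizer state stabilized by the quadratic operators $i\hat{b}_{2j-1}\hat{b}_{2j}$ (exactly as for the worked example $\psi_e$ earlier in this section), and such states obey Wick's theorem with vanishing odd correlators; hence $\psi$ is Gaussian.

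The converse is the substantive half. Assuming $\psi$ is Gaussian, I would compute its purity in the Majorana monomial basis $\{\hat{c}_S\}_{S\subseteq\{1,\ldots,2d\}}$, which is orthogonal under the Hilbert--Schmidt product. Expanding $\rho=|\psi\rangle\langle\psi|/\Gamma$ in this basis and invoking \Eq{wick} together with the vanishing of odd correlators, each coefficient becomes a Pfaffian of a principal submatrix, and $\Tr(\rho^2)$ collapses to $2^{-d}\sum_{S}\mathrm{Pf}(M|_S)^2$. The key algebraic inputs are the identity $\mathrm{Pf}(M|_S)^2=\det(M|_S)$ and the principal-minor sum $\sum_S\det(M|_S)=\det(I+M)=\prod_j(1+\mu_j^2)$, which yield the clean formula $\Tr(\rho^2)=2^{-d}\prod_{j=1}^d(1+\mu_j^2)$. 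Since $\psi$ is a pure state, $\Tr(\rho^2)=1$, so $\prod_j(1+\mu_j^2)=2^d$; combined with $1+\mu_j^2\le 2$ from the universal bound above, every factor must equal $2$, i.e.\ $\mu_j^2=1$. This is exactly the statement that $OMO\tp$ squares to $-I$, hence $M^2=-I$ and $MM\tp=I$.

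The hard part will be the bookkeeping in this last step: tracking the phase factors $i^m$ in \Eq{wick} so that $\Tr(\rho^2)$ comes out real and positive, and verifying the two determinant identities. I expect the only genuine conceptual subtlety is the recognition that Gaussianity alone does \emph{not} force $\mu_j=1$ (mixed thermal Gaussian states have $\mu_j<1$ and still obey Wick's theorem); it is the purity of the vector $\psi$, entering through $\Tr(\rho^2)=1$, that pins the eigenvalues to the boundary of the allowed range.
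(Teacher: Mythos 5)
Your argument is correct, and it is worth noting that the paper itself offers no proof of this Fact at all --- it is stated as ``well-known'' with a pointer to Ref.~\cite{Bravyi04}, where the standard derivation goes through the Grassmann-integral (Lagrangian) representation of Gaussian states. Your route is genuinely different and arguably better matched to the paper's Wick-theorem definition of Gaussianity: the normal-mode decomposition $OMO\tp$ plus positivity of $a_j^\dag a_j$ gives the universal bound $\mu_j\le 1$, and the purity identity $\Tr(\rho^2)=2^{-d}\sum_S \mathrm{Pf}(M|_S)^2=2^{-d}\det(I+M)=2^{-d}\prod_j(1+\mu_j^2)$ then pins every $\mu_j$ to $1$. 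All the algebra here checks out (the Hermitian monomials $i^m\hat{c}_S$ square to $I$ and are Hilbert--Schmidt orthogonal, $\mathrm{Pf}^2=\det$, odd principal minors of an antisymmetric matrix vanish, and the principal-minor sum is $\det(I+M)$), and your closing remark --- that it is purity, not Wick's theorem, that forces the eigenvalues to the boundary --- is exactly the right conceptual point.

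Two soft spots, neither fatal. First, in the direction $MM\tp=I\Rightarrow$ Gaussian, the entire content is the assertion that the Fock vacuum of the modes $a_j$ obeys Wick's theorem; you state this rather than prove it, and you must not justify it by appeal to the paper's example $\psi_e$, whose Gaussianity is itself deduced \emph{from} this Fact (a circularity). The clean fix is one paragraph: on a stabilizer state with stabilizer group generated by $\{-i\hat{b}_{2j-1}\hat{b}_{2j}\}$, the expectation of a Majorana monomial $\hat{b}_S$ is nonzero only when $S$ is a union of complete pairs $\{2j-1,2j\}$, in which case it equals $\pm 1$; for the block-diagonal covariance matrix $\bigoplus_j\smx{0&1\\-1&0}$ the Pfaffian $\mathrm{Pf}(M|_S)$ has exactly the same support and sign, so the two sides of \Eq{wick} agree term by term, and odd correlators vanish because odd monomials anticommute with the parity operator $(-i)^d\hat{b}_1\cdots\hat{b}_{2d}$, which stabilizes the vacuum. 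Second, a sign slip: $a_j\psi=0$ is equivalent to $i\hat{b}_{2j-1}\hat{b}_{2j}\psi=-\psi$, so the stabilizer is $-i\hat{b}_{2j-1}\hat{b}_{2j}$, not $i\hat{b}_{2j-1}\hat{b}_{2j}$ (consistent with $\mu_j=+1$ in your convention).
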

One can easily check that standard anti-symmetric matrix $M_0$ defined in \Eq{M0} satisfies
$M_0M_0^T=I$. This shows that $\psi_e$ is a Gaussian state with the 
covariance matrix $M_0$ and the norm $\Gamma=2^{d-1}$. 
\begin{fact}
Let $\psi=(M,\Gamma)$ and $\phi=(M',\Gamma')$ be Gaussian states
of $d$ qubits. 
Then 
\begin{equation}
\label{eq:overlap}
|\langle \phi|\psi\rangle|=\frac{\sqrt{\Gamma \Gamma'}}{2^{d/2}} \, \det{(M+M')}^{1/4}.
\end{equation}
\end{fact}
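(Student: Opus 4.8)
The plan is to prove the equivalent squared identity $|\langle\phi|\psi\rangle|^2 = 2^{-d}\,\Gamma\Gamma'\sqrt{\det(M+M')}$, which removes the outer square roots and the absolute value. First I would factor out the norms, writing $\psi=\sqrt{\Gamma}\,\hat\psi$ and $\phi=\sqrt{\Gamma'}\,\hat\phi$ with $\hat\psi,\hat\phi$ normalized. Because the covariance matrix in \Eq{M} already carries the normalization $1/\langle\psi|\psi\rangle$, the matrices $M,M'$ are unchanged by this rescaling, and $|\langle\phi|\psi\rangle|^2=\Gamma\Gamma'\,\Tr(\hat\rho\hat\rho')$ with $\hat\rho=|\hat\psi\rangle\langle\hat\psi|$ and $\hat\rho'=|\hat\phi\rangle\langle\hat\phi|$. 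Hence it suffices to show $\Tr(\hat\rho\hat\rho')=2^{-d}\sqrt{\det(M+M')}$. The point of this reduction is that $\hat\rho,\hat\rho'$ are Gaussian operators whose matrix elements are controlled by Wick's theorem.

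Next I would expand both projectors in the basis of Majorana monomials $\hat c_S=\prod_{p\in S}\hat c_p$ (product in increasing order) indexed by subsets $S\subseteq\{1,\ldots,2d\}$. These $4^d$ operators are orthogonal, $\Tr(\hat c_S^\dagger\hat c_{S'})=2^d\delta_{S,S'}$, with $\hat c_S^2=(-1)^{\binom{|S|}{2}}I$. Reading off the coefficients with \Eq{wick}, and using that odd tuples of Majorana operators vanish on a Gaussian state, gives $\hat\rho=2^{-d}\sum_S i^{|S|/2}\,\mathrm{Pf}(M|_S)\,\hat c_S$ summed over even $S$, and likewise for $\hat\rho'$. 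Substituting and applying orthogonality collapses the double sum to a single one; the accumulated phase $i^{|S|}(-1)^{\binom{|S|}{2}}$ equals $1$ for every even $|S|$, so I arrive at the clean intermediate formula
\begin{equation}
\Tr(\hat\rho\hat\rho')=2^{-d}\sum_{S}\mathrm{Pf}(M|_S)\,\mathrm{Pf}(M'|_S),
\end{equation}
the sum ranging over even subsets $S$.

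It then remains to evaluate this Pfaffian minor sum. I would use the Grassmann generating identity $\exp(\tfrac12\theta\tp A\theta)=\sum_S\mathrm{Pf}(A|_S)\,\theta_S$ for an antisymmetric $A$, together with two independent sets of Grassmann variables $\theta,\xi$ and a coupling factor whose Berezin integral is nonzero exactly when the two index sets coincide. This turns the sum into a single Gaussian Grassmann integral over $4d$ variables, equal to the Pfaffian of a $4d\times4d$ block matrix built from $M$, $M'$, and the coupling; squaring that Pfaffian yields $\det(I-MM')$. Here I would invoke Fact~1 in the form $M^2=(M')^2=-I$ (valid for pure Gaussian states, since $M\tp=-M$), which gives $\det M=1$ and $M+M'=M(I-MM')$, hence $\det(M+M')=\det(I-MM')$. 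The magnitudes therefore agree, and the sign is fixed by $\Tr(\hat\rho\hat\rho')\ge0$ (a product of positive operators), forcing the nonnegative root $+\sqrt{\det(M+M')}$. Combining the three steps completes the proof.

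The hard part will be this last step: the unconstrained Pfaffian minor-sum identity does not by itself produce $\sqrt{\det(M+M')}$ — for generic antisymmetric $M,M'$ it gives $\det(I-MM')$ up to a sign that may be negative — so the purity constraint $M^2=-I$ and the positivity of $\Tr(\hat\rho\hat\rho')$ are both essential to collapse it to the stated square-root form. The remaining delicate point is the bookkeeping of Grassmann reordering signs when establishing the coupling identity; I would pin these down by checking the $d=1$ case, where the formula reduces to $1+ab$ with $a,b\in\{\pm1\}$ and matches $\sqrt{\det(M+M')}$.
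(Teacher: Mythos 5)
The paper does not prove this Fact at all --- it is quoted as a ``well-known fact'' with a pointer to Ref.~[24] (Bravyi, \emph{Lagrangian representation for fermionic linear optics}), so there is no in-paper argument to compare yours against. Your derivation is sound and is essentially a self-contained reconstruction of the standard proof from that reference's toolbox. I checked the key intermediate claims: the coefficient $i^{|S|/2}\,\mathrm{Pf}(M|_S)$ in the Majorana-monomial expansion of a normalized pure Gaussian projector is correct given the paper's convention in \Eq{wick}, the phase $i^{|S|}(-1)^{\binom{|S|}{2}}$ does equal $1$ for even $|S|$, and the purity reduction $\det(M+M')=\det M\cdot\det(I-MM')=\det(I-MM')$ (using $M^2=-I$ from Fact~1 and $\det M=\mathrm{Pf}(M)^2=1$) together with $\Tr(\hat\rho\hat\rho')\ge 0$ correctly pins the branch of the square root. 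The only step you have not actually executed is the Grassmann coupling identity converting $\sum_S\mathrm{Pf}(M|_S)\mathrm{Pf}(M'|_S)$ into a Pfaffian of a $4d\times 4d$ block matrix; you are right that this is where all the sign traps live, and your plan (fix the conventions by the $d=1$ case $1+ab$ versus $\sqrt{\det(M+M')}=|a+b|$ for $a,b=\pm1$, then let positivity select the root) is adequate, though for a fully rigorous write-up you would want to state and prove that coupling lemma explicitly rather than calibrate it on one example. In short: correct approach, one standard lemma left as an honest IOU, and it fills a gap the paper delegates to a citation.
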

\begin{fact}
Let $G$ be a (complex) anti-symmetric matrix of size $2d\times 2d$. 
Consider an operator 
\begin{equation}
\label{eq:W}
W=\exp{(\hat{G})}, \quad \hat{G}=\sum_{1\le p<q\le 2d} G_{p,q} \hat{c}_p \hat{c}_q.
\end{equation}
Then $W$ maps Gaussian states to Gaussian states. 
\end{fact}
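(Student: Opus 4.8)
The plan is to reduce the entire statement to the fact that conjugation by $W$ acts \emph{linearly} on the Majorana operators. First I would show that $\mathrm{ad}_{\hat{G}}$ preserves the span of $\hat{c}_1,\ldots,\hat{c}_{2d}$: using only the commutation rules \Eq{CR}, a direct computation gives $[\hat{c}_p\hat{c}_q,\hat{c}_r]=2\delta_{q,r}\hat{c}_p-2\delta_{p,r}\hat{c}_q$, so $[\hat{G},\hat{c}_r]$ is again a linear combination of Majorana operators. Exponentiating the adjoint action with the Hadamard series then yields
\begin{equation}
W\hat{c}_p W^{-1}=\sum_{q=1}^{2d} R_{p,q}\,\hat{c}_q
\end{equation}
for some matrix $R=R(G)$. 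Imposing $\{W\hat{c}_pW^{-1},W\hat{c}_qW^{-1}\}=W\{\hat{c}_p,\hat{c}_q\}W^{-1}=2\delta_{p,q}I$ forces $RR\tp=I$, i.e.\ $R$ lies in the complex orthogonal group (and $\det R=1$ since $R$ sits in the identity component as a matrix exponential of an antisymmetric generator).

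With the linear action in hand, I would treat the odd and even correlators of $\phi\equiv W\psi$ separately. For the odd correlators, let $P=(-i)^d\hat{c}_1\hat{c}_2\cdots\hat{c}_{2d}$ be the fermion parity operator; it satisfies $P\hat{c}_pP^{-1}=-\hat{c}_p$ and commutes with every quadratic operator, so $[P,W]=0$. A Gaussian $\psi$ has definite parity: writing $\psi=\psi_++\psi_-$ in the two parity sectors, vanishing of \emph{all} odd Majorana correlators forces the parity-odd part of $\lvert\psi\rangle\langle\psi\rvert$, namely $\lvert\psi_+\rangle\langle\psi_-\rvert+\lvert\psi_-\rangle\langle\psi_+\rvert$, to vanish, hence $\psi_+=0$ or $\psi_-=0$. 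Then $P\phi=WP\psi=\pm\phi$, so $\phi$ has definite parity and all its odd correlators vanish. For the even correlators the cleanest route is the annihilation-operator characterization of Gaussian states: $\psi$ is Gaussian iff it is annihilated by a $d$-dimensional subspace $\calL\subset\CC^{2d}$ that is isotropic for the form $u\tp v$, acting through $\hat{a}(u)=\sum_p u_p\hat{c}_p$. Since $W\hat{a}(u)W^{-1}=\hat{a}(R\tp u)$ and $RR\tp=I$ preserves both the dimension and the isotropy of $R\tp\calL$, the state $\phi$ is annihilated by a $d$-dimensional isotropic space and is therefore Gaussian.

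To pin down the transformed data concretely I would, in the special case of real $G$ where $\hat{G}^\dag=-\hat{G}$ and $W$ is unitary, expand $\langle\phi|i^m\hat{c}_{q_1}\cdots\hat{c}_{q_{2m}}|\phi\rangle=\langle\psi|i^m\prod_j(W^{-1}\hat{c}_{q_j}W)|\psi\rangle$, substitute the linear rule, and reassemble the resulting sum using the Pfaffian identity $\mathrm{Pf}(R\tp M R)=\det(R)\,\mathrm{Pf}(M)$ together with its submatrix version; this reproduces Wick's theorem for $\phi$ with covariance matrix $M'=R\tp M R$ and norm $\Gamma'=\Gamma$. The main obstacle is precisely that for \emph{complex} antisymmetric $G$ the operator $W$ is no longer unitary, so $W^\dag\neq W^{-1}$ and the sandwich $\langle\psi|W^\dag(\cdots)W|\psi\rangle$ does not collapse under a single conjugation (and it mixes $G$ with $\bar{G}$, defeating a naive analytic continuation from the real case). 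This is why I would make the annihilation-subspace argument of the previous paragraph the primary line, since it never invokes $W^\dag$ and hence covers complex $G$ uniformly; the only ingredient it borrows is the equivalence between that characterization and the Wick/zero-odd-correlator definition, which is standard (cf.\ Ref.~\cite{Bravyi04}).
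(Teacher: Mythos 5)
The paper never proves this statement: it appears in a list prefaced by ``We shall need the following well-known facts, see for instance Ref.~\cite{Bravyi04},'' so there is no in-paper argument to compare yours against. Your proof is correct, and it is essentially a reconstruction of the argument underlying that reference. The chain (i) $[\hat{c}_p\hat{c}_q,\hat{c}_r]=2\delta_{q,r}\hat{c}_p-2\delta_{p,r}\hat{c}_q$, hence $W\hat{c}_pW^{-1}=\sum_q R_{p,q}\hat{c}_q$, (ii) preservation of the anticommutation relations forces $RR\tp=I$ in the \emph{complex} orthogonal group, (iii) transport of the $d$-dimensional isotropic annihilator subspace, is precisely the ``Lagrangian representation'' of fermionic linear optics, and your decision to make (iii) the primary line is the right one: it is the only route that handles non-unitary $W$ cleanly, and the operators $\hat{H}^j,\hat{V}^j$ to which the paper applies this fact are indeed non-unitary. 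Your parity argument for the odd correlators is also sound --- odd monomials in the $\hat{c}_p$ span the parity-odd operators and the trace pairing is nondegenerate there, so vanishing of all odd correlators does force $\psi_+=0$ or $\psi_-=0$, and $[P,W]=0$ then transports definite parity to $W\psi$. Two small remarks. First, the one place where real content is imported rather than proved is the equivalence between ``annihilated by a $d$-dimensional isotropic subspace'' and the Wick's-theorem definition of Gaussian that the paper uses; citing \cite{Bravyi04} for this is legitimate, but be aware that most of the substance of the fact lives in that equivalence, so your argument is a reduction to it rather than a from-scratch derivation. Second, it is worth stating explicitly that $W=\exp(\hat{G})$ is invertible (with inverse $\exp(-\hat{G})$), so $W\psi\neq 0$ and the image really is a nonzero Gaussian state; you use $W^{-1}$ throughout but never say this.
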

The last fact will be very important for us since the operators
$\hat{H}^j$ and $\hat{V}^j$ constructed in \Sect{reduction} have the form \Eq{W}.
Indeed, consider the gates $G(w)_a$ and $G'(w)_{a,a+1}$
where $G(w),G'(w)$ are defined in \Eq{gates}
and 
the subscripts indicate which qubits are acted upon by the gate.
One can easily check that $G(w)_a=\sqrt{w}e^{\beta Z_a}$, where $\beta$
is defined through $e^{-2\beta}=w$. Likewise, $G'(w)_{a,a+1}=\sqrt{w} e^{\beta X_a X_{a+1}}$.
From \Eq{useful}  and  \Eq{hatH} one gets 
\begin{equation}
\label{eq:hatH1}
\hat{H}^j =\sqrt{w_{e_1} \cdots w_{e_d}} \exp{\left(
\sum_{a=1}^d \beta_a (-i)\hat{c}_{2a-1} \hat{c}_{2a} \right)},
\end{equation}
where $\beta_a$ is defined through
\[
e^{-2\beta_a}=w_{e_a}, \quad a=1,\ldots,d.
\]
Likewise, \Eq{hatV} implies
\begin{equation}
\label{eq:hatV1}
\hat{V}^j =\sqrt{w_{e_1} \cdots w_{e_{d-1}}} \exp{\left(
\sum_{a=1}^{d-1} \beta_a(-i)\hat{c}_{2a}\hat{c}_{2a+1}
\right)}.
\end{equation}
Since $\psi_e$ is a Gaussian state, 
Fact~3  implies that all intermediate states obtained from $\psi_e$
by applying the operators $\hat{H}^j$ and $\hat{V}^j$ are Gaussian. 
Therefore $\calZ(w)=\langle\psi_e|\hat{H}^d \hat{V}^{d-1} \cdots \hat{V}^{1} \hat{H}^1|\psi_e\rangle$
can be efficiently computed
if we have a  rule describing how the covariance matrix
and the norm of a Gaussian state change upon application of $\hat{H}^j$ and $\hat{V}^j$. 
The desired rule can be obtained using  a fermionic version of the Jamiolkowski duality 
between states and linear maps introduced in~\cite{Bravyi04}.
Let us first define a fermionic version of the maximally entangled state
for a bipartite system of $d+d$ qubits. Let $\hat{c}_1,\ldots,\hat{c}_{4d}$
be the Majorana operators defined for a system of $2d$ qubits
according to \Eq{majorana}. 
Define a $2d$-qubit state normalized $\psi_I$ such that
$\psi_I$ has a stabilizer group
\[
\calG(\psi_I)=\langle (-i) \hat{c}_a\hat{c}_{a+2d}, \quad a=1,\ldots,2d\rangle.
\]
One can easily check that $\psi_I$ has a covariance matrix
\[
M_I=\left[ \ba{cc} 0 & I \\ -I & 0 \\ \ea \right],
\]
where each block has dimensions $2d\times 2d$. Fact~1 implies that
$\psi_I$ is a Gaussian state. 
Let $W=\exp{(\hat{G})}$ be the operator defined in \Eq{W}.
Define a $2d$-qubit state
\begin{equation}
\label{eq}
\psi_W= (W\otimes I) \psi_I  \in  \calH_{2d}.
\end{equation}
Note that $\psi_W$ is a Gaussian state due to Fact~3.
Let 
\[
M_W=\left[ \ba{cc} A & B \\ -B^T & D\\ \ea \right]
\]
be the covariance matrix of $\psi_W$. Here $A,B,D$ are some 
matrices of size $2d\times2d$.  Let $\Gamma_W=\langle \psi_W|\psi_W\rangle$.  
We shall need the following fact proved in~\cite{Bravyi04}.
\begin{fact}
Let $\psi=(M,\Gamma)$ be a Gaussian state. Then $W\psi=(M',\Gamma')$, where
\begin{equation}
\label{eq:map1}
M'=A-B(M-D)^{-1} B^T 
\end{equation}
and 
\begin{equation}
\label{eq:map2}
\Gamma'=\Gamma_W \Gamma \sqrt{\det{(M-D)}}.
\end{equation}
\end{fact}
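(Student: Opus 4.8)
The plan is to follow the fermionic Jamiolkowski duality of Ref.~\cite{Bravyi04}, built on the maximally entangled state $\psi_I$ and the channel state $\psi_W=(W\otimes I)\psi_I$ already introduced. Since $W$ is not unitary, the usual Heisenberg-picture conjugation $W\hat c_p W^{-1}$ is awkward, so the first and most important step is to reduce the action of $W$ on an arbitrary Gaussian state to a \emph{partial inner product} of two Gaussian states. Concretely, I would establish a fermionic sliding identity of the form $(W\otimes I)\psi_I=(I\otimes W\tp)\psi_I$, which holds because $\psi_I$ is stabilized by the pairings $(-i)\hat c_a\hat c_{a+2d}$ of system and ancilla modes, and combine it with the defining property of the maximally entangled state that contracting its ancilla modes against a bra returns the (conjugate) state on the system modes. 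This yields the key relation
\[
W\psi=(I\otimes\langle\psi'|)\,\psi_W,
\]
where $\psi'$ is the Gaussian state on the ancilla modes determined by $\psi$. Computing $(M',\Gamma')$ for $W\psi$ then amounts to contracting $\psi_W$, whose covariance matrix has blocks $A,B,D$, against $\psi$ over the $2d$ ancilla modes.

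For the covariance matrix I would evaluate the second moments $\langle W\psi|\hat c_p\hat c_q|W\psi\rangle$ on the surviving system modes $p,q\in\{1,\ldots,2d\}$ using Wick's theorem (\Eq{wick}) on the joint state and resolving the ancilla modes through the covariance matrix $M$ of $\psi$. The ancilla modes enter only through their second moments, so eliminating them is a Gaussian (Schur-complement) operation: the system block $A$ is corrected by feeding the cross-correlations $B$ through the combined ancilla block, producing $M'=A-B(M-D)^{-1}B\tp$. As a check, the choice $W=I$ gives $\psi_W=\psi_I$, hence $A=0$, $B=I$, $D=0$, so that $M'=-M^{-1}=M$ using $M\tp=-M$ and $MM\tp=I$ from Fact~1; this fixes the overall signs and confirms the appearance of $(M-D)^{-1}$ rather than $(M+D)^{-1}$.

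For the norm I would compute $\Gamma'=\langle W\psi|W\psi\rangle$ from the same contraction, expressing it as a Gaussian overlap over the ancilla subsystem and invoking the overlap formula of Fact~2 (\Eq{overlap}). This contributes the two state norms $\Gamma_W=\langle\psi_W|\psi_W\rangle$ and $\Gamma=\langle\psi|\psi\rangle$ together with a determinantal factor that specializes to $\sqrt{\det(M-D)}$ on the $2d$ ancilla modes, yielding $\Gamma'=\Gamma_W\,\Gamma\,\sqrt{\det(M-D)}$. One may also verify $M'M'\tp=I$ as a consistency check, which is in any case guaranteed by Fact~3.

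I expect the main obstacle to be the careful bookkeeping of fermionic parity signs in the partial inner product. Unlike the bosonic contraction over qubits, contracting over fermionic ancilla modes introduces parity-dependent signs, and both the transpose $W\tp$ in the sliding identity and the complex-conjugation convention relating $\psi'$ to $\psi$ must be handled consistently with the Majorana ordering fixed in \Eq{majorana}. Getting these conventions right is precisely what makes the covariance matrix $M$ of $\psi$, rather than its conjugate, appear in the Schur complement and selects the sign in $(M-D)^{-1}$; once the fermionic contraction rule is in place, \Eq{map1} and \Eq{map2} follow by routine linear algebra and the determinant identities underlying Fact~2.
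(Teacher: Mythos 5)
Your proposal is correct and takes essentially the same route as the paper, which does not prove this fact itself but imports it from Ref.~\cite{Bravyi04} after setting up exactly the objects your argument uses ($\psi_I$, $\psi_W=(W\otimes I)\psi_I$, the blocks $A,B,D$ of $M_W$, and $\Gamma_W$). Your Schur-complement contraction over the ancilla modes for \Eq{map1}, the Gaussian-overlap argument for \Eq{map2}, and the $W=I$ consistency check (where $A=0$, $B=I$, $D=0$ gives $M'=-M^{-1}=M$ by $MM\tp=I$ and antisymmetry) reproduce the cited derivation's structure and signs.
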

It remains to compute $(M_W,\Gamma_W)$ for the two special 
cases $W=\hat{H}^j$ and $W=\hat{V}^j$. 

We shall perform the calculation for $W=\hat{H}^j$ since both cases
are quite similar. 
First we note that $W$ is a product of operators acting on disjoint
pairs of Majorana modes $(\hat{c}_{2a-1},\hat{c}_{2a})$. 
Accordingly, $\psi_W$ is a product of states involving disjoint  $4$-tuples of  Majorana modes
$(\hat{c}_{2a-1},\hat{c}_{2a},\hat{c}_{2a-1+2d},\hat{c}_{2a+2d})$.
It suffices to compute the covariance matrix and the norm for each of those $4$-tuples.
Equivalently, it suffices to do the calculation for $d=1$. In this case
$W=G(w)$ is the single-qubit operator defined in \Eq{gates}.
By definition, $\psi_I$ is a two-qubit state with stabilizers
$(-i)\hat{c}_1\hat{c}_3=-Y_1X_2$ and $(-i)\hat{c}_2\hat{c}_4=X_1Y_2$. 
It can be written explicitly as 
\[
|\psi_I\rangle = \frac1{\sqrt{2}} (|10\rangle+  i|01\rangle).
\]
Hence
\[
|\psi_W\rangle\equiv (W\otimes I)|\psi_I\rangle = \frac1{\sqrt{2}} (w|10\rangle + i|01\rangle).
\]
This state has norm
\[
\Gamma_W=\langle\psi_W|\psi_W\rangle = \frac12 (1+w^2).
\]
To compute the covariance matrix $M_W$ we shall use a shorthand notation
\[
\langle \cdot \rangle \equiv \frac{\langle \psi_W|\cdot |\psi_W\rangle}{\langle\psi_W|\psi_W\rangle}.
\]
By definition, 
\[
(M_W)_{p,q}=\langle (-i)\hat{c}_p \hat{c}_q \rangle \quad \mbox{for} \quad 1\le p<q\le 4.
\]
A straightforward calculation shows that the only non-zero elements (with $p<q$)  of 
$M_W$ are 
\[
(M_W)_{1,2}= \langle Z_1 \rangle = \frac{1-w^2}{1+w^2} \equiv t,
\]
\[
(M_W)_{1,3}= \langle -Y_1 X_2 \rangle =\frac{2w}{1+w^2}\equiv s,
\]
\[
(M_W)_{2,4}=\langle X_1 Y_2\rangle =s
\]
and
\[
(M_W)_{3,4}=\langle Z_2\rangle = -t.
\]
Thus 
\begin{equation}
\label{eq:MW}
M_W=\left[ \ba{cc} A & B \\ -B^T & D \\ \ea \right],
\end{equation}
where
\[
A=-D=\left[ \ba{cc} 0 & t \\ -t & 0 \\ \ea \right] \quad \mbox{and} \quad
B=\left[ \ba{cc} s & 0 \\ 0 & s \\ \ea \right].
\]
For an arbitrary $d$ we just need to take a direct sum of $d$ matrices $M_W$
as above  and take the product of $d$ normalizing coefficients $\Gamma_W$
defined above. This yields 
\[
\Gamma_W=\prod_{a=1}^d \frac12(1+w_{e_a}^2)
\] 
whereas $M_W$ is given by \Eq{MW} where 
$A=-D=\calA(t_1,0,t_2,0,\ldots,0,t_d)$ and $B=\calD(s_1, s_1,\ldots,s_d,s_d)$ with
\[
t_a=\frac{1-w_{e_a}^2}{1+w_{e_a}^2} \quad \mbox{and} \quad
s_a=\frac{2w_{e_a}}{1+w_{e_a}^2}.
\]
Combining the above analysis and Fact~4 we infer   that the function SimulateHorizontal$(j,M,\Gamma)$
defined in \Sect{algorithm1} describes how the covariance matrix
and the norm of a Gaussian state change under application of 
the operator $\hat{H}^j$. 
A similar calculation shows that the function SimulateVertical$(j,M,\Gamma)$
 describes how the covariance matrix
and the norm of a Gaussian state change under application of 
the operator $\hat{V}^j$. The very last step of Algorithm~1
correspond to computing the overlap between
$\psi_e$ and the final state 
$\hat{H}^d \cdots \hat{V}^{1} \hat{H}^1 \psi_e$
using \Eq{overlap}.
This completes the proof of correctness of Algorithm~1.

\section{Approximate algorithm}
\label{sect:XYZnoise}

In this section we  describe an approximate algorithm
for computing the coset probabilities. It is applicable to a general stochastic i.i.d. Pauli
noise including the depolarizing noise. 
We assume some level of familiarity with 
matrix product states and tensor networks, see~\cite{Schollwock11} or~\cite{Verstraete08review}
for  a thorough  review. For the sake of completeness we summarize some basic facts
about matrix product states in \Sect{truncate}. 

\subsection{Construction of the tensor network}
\label{sect:TN}

Let $f\calG$ be one of the cosets $\calC^s_I,\calC^s_X,\calC^s_Y,\calC^s_Z$
defined in \Sect{mld}. Our goal is to compute the coset probability $\pi(f\calG)$.
Let $\pi_1$ be any probability distribution on the single-qubit Pauli
group. For example,  
\[
\pi_1(X)=\pi_1(Y)=\pi_1(Z)=\epsilon/3 \quad \mbox{and} \quad \pi_1(I)=1-\epsilon
\]
for the depolarizing noise with a rate $\epsilon$.
By definition, 
\begin{equation}
\label{eq:coset1}
\pi(f\calG)=\sum_{g\in \calG} \prod_e \pi_1(f_e g_e),
\end{equation}
where the product ranges over all edges of the surface code lattice. 
Let us parameterize $g\in \calG$ by binary variables $\alpha_u,\beta_p\in \{0,1\}$
associated with sites $u$ and plaquettes $p$ such that 
\[
g(\alpha;\beta)= \prod_u (A_u)^{\alpha_u} \cdot \prod_p (B_p)^{\beta_p}.
\]
Here we used a convention $(B_p)^0\equiv I$ and $(A_u)^0\equiv I$. 
Let $e$ be some edge of the surface code lattice with endpoints
$u(e),v(e)$ and adjacent plaquettes $p(e),q(e)$, see \Fig{edge}.
Let $g_e$ be the restriction of $g$ onto the qubit $e$. 
Clearly,  $g_e$ 
depends only on the bits $\alpha_{u(e)},\alpha_{v(e)}$ and $\beta_{p(e)},\beta_{q(e)}$.
Thus we can write
\[
g_e(\alpha;\beta)=g_e(\alpha_{u(e)},\alpha_{v(e)};\beta_{q(e)},\beta_{q(e)}),
\]
where $g_e(i,j;k,l)$ is a function of just four binary variables $i,j,k,l\in \{0,1\}$.
For horizontal edges located at the left or the  right boundary of the lattice the variable
$\alpha_{u(e)}$ or $\alpha_{v(e)}$ respectively is missing.
Likewise, for horizontal edges located at the top or the bottom  boundary
the variable $\beta_{p(e)}$ 
or $\beta_{q(e)}$ respectively is missing. 
We arrive at
\begin{equation}
\label{eq:coset2}
\pi(f\calG)=\sum_\alpha \sum_\beta T(\alpha;\beta),
\end{equation}
where the sums range over binary strings
$\alpha,\beta\in \{0,1\}^{d(d-1)}$ corresponding to all possible configurations
of variables $\alpha_u,\beta_p$ and 
\begin{equation}
\label{eq:coset3}
T(\alpha;\beta)= \prod_e \pi_1(f_eg_e(\alpha_{u(e)},\alpha_{v(e)};\beta_{q(e)},\beta_{q(e)})).
\end{equation}
\begin{figure}[h]
\centerline{\includegraphics[height=2cm]{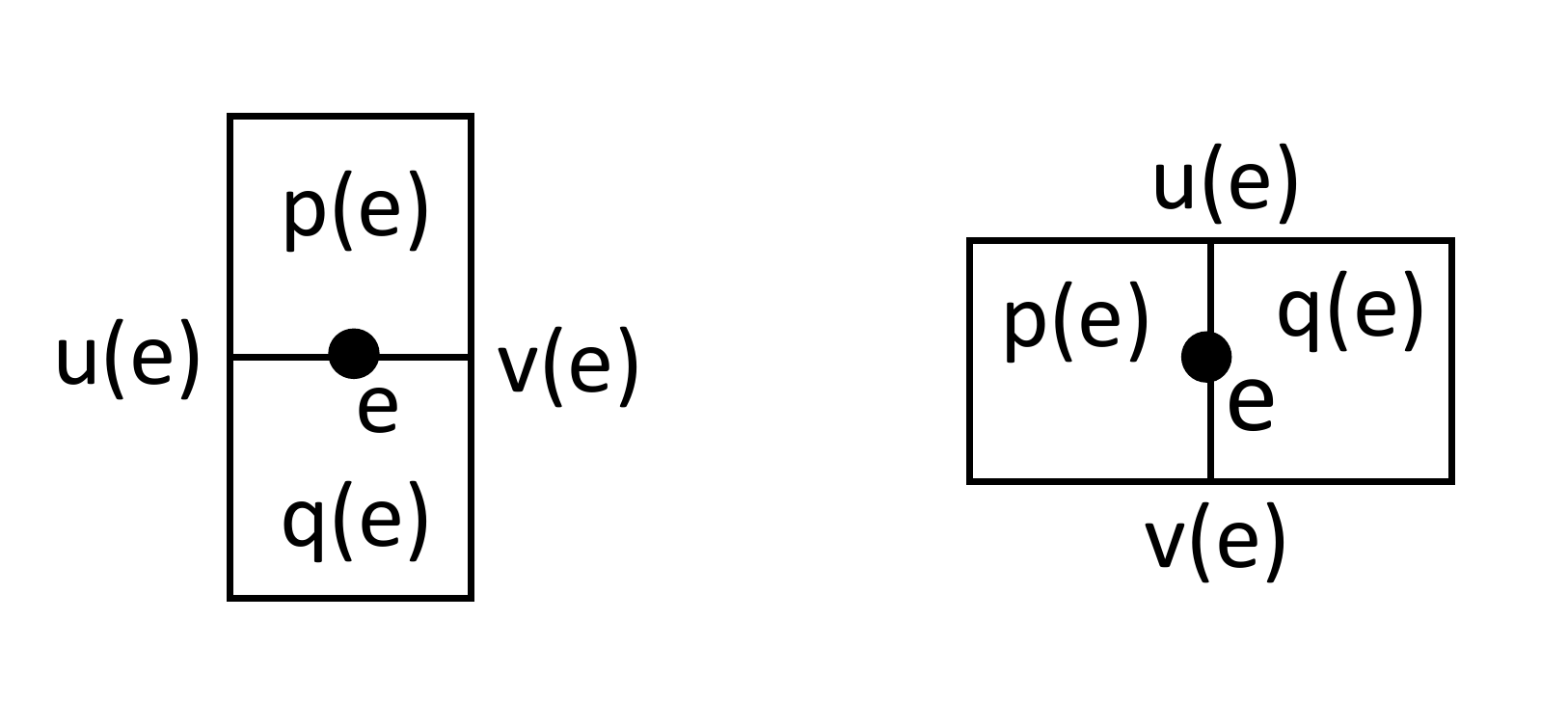}}
\caption{The restriction of a stabilizer $g(\alpha;\beta)$ onto the edge $e$
depends only on the variables $\alpha_{u(e)}$, $\alpha_{v(e)}$
and $\beta_{p(e)}$, $\beta_{q(e)}$.
\label{fig:edge}
}
\end{figure}
The righthand side of \Eq{coset2} coincides with the contraction value of a
properly defined tensor network on a two-dimensional grid. 
To define this tensor network, consider the 
extended surface code lattice shown on \Fig{network}. 
The extended lattice has three types of nodes which we call
$s$-nodes, $h$-nodes, and $v$-nodes. Each
$s$-node represents a location of a stabilizer (either a site 
stabilizer $A_u$ or plaquette stabilizer $B_p$)
while $h$-nodes and $v$-nodes represent code qubits  located on 
horizontal and vertical edges of the original surface code lattice respectively.
We shall refer to edges of the extended lattice as {\em links} to 
distinguish them from edges of the original surface code lattice. 

\begin{figure}[h]
\centerline{\includegraphics[height=4cm]{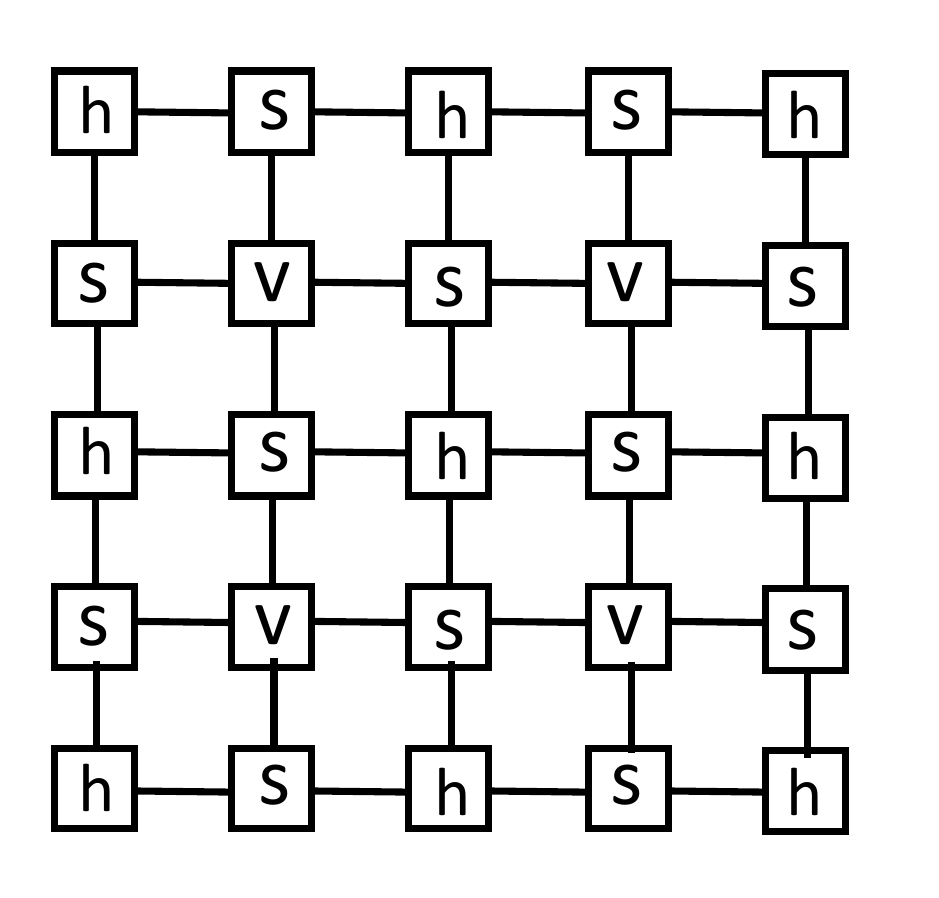}}
\caption{The extended surface code lattice for $d=3$.  
Locations of stabilizers are represented by $s$-nodes. 
Code qubits located on horizontal and vertical edges of the original lattice
are represented by $h$-nodes and $v$-nodes respectively. 
In general, the extended lattice has dimensions $(2d-1)\times (2d-1)$. 
\label{fig:network}
}
\end{figure}

Consider any configuration of variables $\alpha,\beta$ and 
the corresponding term $T(\alpha;\beta)$ in \Eq{coset2}.
For each site stabilizer $A_u$ let us copy the corresponding variable
$\alpha_u$ to all links incident to the $s$-node $u$.
Likewise, for each plaquette stabilizer $B_p$ let us copy the corresponding variable
$\beta_p$ to all links incident to the $s$-node $p$.
We obtain a labeling of the links  by binary variables $\gamma(\alpha;\beta)$
with the property that all links incident to any $s$-node have the same label. 
Let us call such a link labeling {\em valid}. 
By definition, $T(\alpha;\beta)$ is a product of terms 
\[
T_e(\alpha;\beta)\equiv  \pi_1(f_eg_e(\alpha_{u(e)},\alpha_{v(e)};\beta_{q(e)},\beta_{q(e)}))
\]
associated with $h$-nodes and $v$-nodes $e$ of the extended lattice. 
Since $\alpha$ and $\beta$ are uniquely determined by the link labeling $\gamma(\alpha;\beta)$,
we can also write $T_e(\alpha;\beta)$ as a function of $\gamma$, that is,
$T_e(\alpha;\beta)=T_e(\gamma)$.  This shows that 
\begin{equation}
\label{eq:coset3}
\pi(f\calG)=\sum_{\mathrm{valid} \; \gamma} \; \prod_{e\in h,v}\;  T_e(\gamma),
\end{equation}
where the product is over all $h$-nodes and $v$-nodes and 
 the sum ranges over all valid link labelings.
We can now extend
the sum in \Eq{coset3} to {\em all} link labelings  $\gamma$ by adding extra 
terms $T_e(\gamma)\in \{0,1\}$ associated with $s$-nodes $e$ such that
$T_e(\gamma)=1$ iff all links incident to $e$ have the same label and $T_e(\gamma)=0$
otherwise. We arrive at 
\begin{equation}
\label{eq:coset4}
\pi(f\calG)=\sum_{\gamma} \; \prod_{e}\;  T_e(\gamma),
\end{equation}
Now the product ranges over all nodes of the extended lattice and the
sum ranges over all link labelings. Furthermore, 
by construction,  each term $T_e(\gamma)$ depends only
on the labels of links incident to the node $e$.  
The expression in the righthand side of \Eq{coset4} is known as a
contraction value of the tensor network defined by the collection of tensors $T_e(\gamma)$.
Tensor networks are usually represented by diagrams like the one shown  on \Fig{network}
such that each box on the diagram carries a tensor with several indexes. 
Indexes of a tensor are associated with the links emanating from the corresponding box. 
Diagrams representing the tensors $T_e(\gamma)$ are shown on
Eqs.~(\ref{eq:snode},\ref{eq:hnode},\ref{eq:vnode}). All tensor
indexes  $i,j,k,l$ on these diagrams take values $0,1$.
For tensors located at the boundary some of the indexes may be missing. 
Note that the order of arguments of $g_e$ is interchanged in Eqs.~(\ref{eq:hnode},\ref{eq:vnode}). 
This is simply because the qubits located on horizontal edges ($h$-nodes)
have site stabilizers on the left and on the right whereas qubits located on vertical edges
($v$-nodes) have site stabilizers on the top and on the bottom. 

\begin{equation}
\label{eq:snode}
\mbox{$s$-node:} \quad
\raisebox{-0.95cm}{\includegraphics[height=2cm]{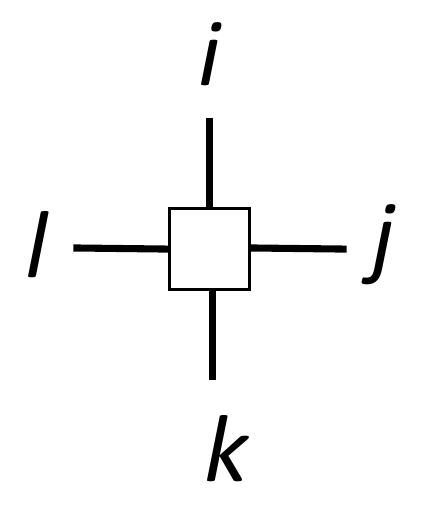}}
\quad = \left\{ \ba{rcl}
1 &\mbox{if} & i=j=k=l\\
0 && \mbox{otherwise} \\
\ea \right.  
\end{equation}
\begin{equation}
\label{eq:hnode} 
\mbox{$h$-node:}\quad 
\raisebox{-0.95cm}{\includegraphics[height=2cm]{snode.pdf}}= \pi_1(f_e g_e(j,l;i,k)) 
\end{equation}
\begin{equation}
\label{eq:vnode} 
\mbox{$v$-node:} \quad 
\raisebox{-0.95cm}{\includegraphics[height=2cm]{snode.pdf}}= \pi_1(f_e g_e(i,k; j,l)) 
\end{equation}

\subsection{Approximate contraction algorithm} 

Let $\mps{\chi}$
and $\mpo{\chi}$ be the set of matrix product states and
matrix product  operators defined on a chain of  $2d-1$ qubits
and having  the bond dimension $\chi$.  In this section we shall 
identify a matrix product state (operator) with the corresponding tensor network. 
Consider a partition of the extended surface code lattice into columns shown on \Fig{slices}.
Each column $V^j$ and each internal column $H^j$ 
defines a matrix product operator $\hat{V}^j\in \mpo{2}$ and $\hat{H}^j\in \mpo{2}$
respectively. 
The first and the last columns $H^1,H^d$ define matrix product states
$\hat{H}^1,\hat{H}^d\in \mps{2}$. Here we identify horizontal links of the lattice
with physical indexes of MPO and MPS, while vertical links correspond to virtual indexes.
By definition, contracting a consecutive pair of columns  is equivalent to taking the product of 
the corresponding MPOs. Thus \Eq{coset4} can be rewritten as 
\begin{equation}
\label{eq:network2}
\pi(f\calG)=\langle \hat{H}^d|\hat{V}^{d-1}  \cdots \hat{H}^2 \hat{V}^1 |\hat{H}^1\rangle.
\end{equation}

\begin{figure}[h]
\centerline{\includegraphics[height=4.5cm]{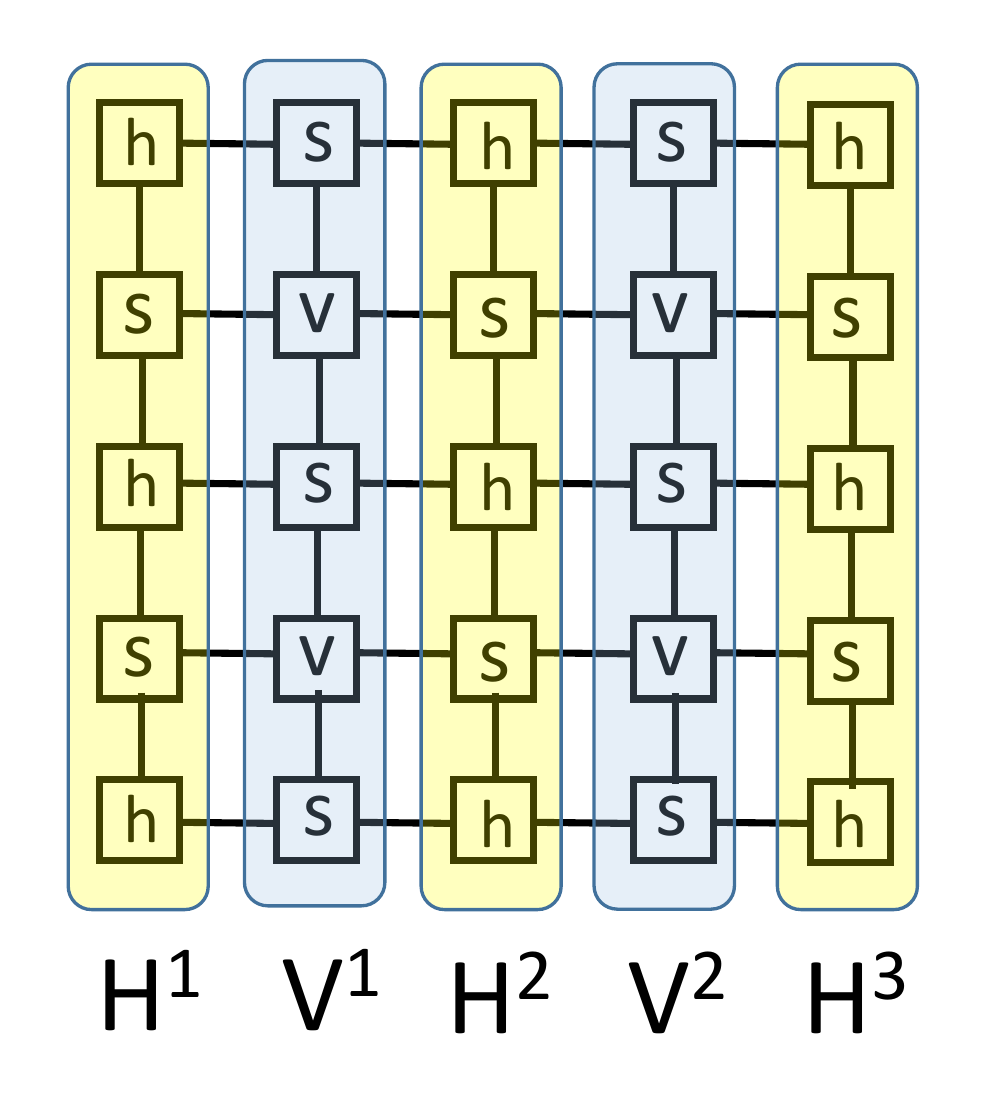}}
\caption{Partition of the extended lattice into  `horizontal' columns $H^1,\ldots,H^d$
and `vertical' columns $V^1,\ldots,V^{d-1}$ (here $d=3$).
\label{fig:slices}
}
\end{figure}
To approximate the righthand side of \Eq{network2}
we shall employ the  algorithm
proposed by Murg, Verstraete and Cirac~\cite{Verstraete04,Murg07}.
The approximation accuracy of the algorithm is controlled by 
an integer parameter $\chi\ge 2$ such that the algorithm becomes exact if
$\chi$ is exponentially large in $d$. 
At each step of the algorithm we maintain a state $\psi\in \mps{\chi}$.
Such a state can be described by a list of $2d-1$ tensors
of dimension $2\times \chi\times \chi$ 
 which requires 
$O(d\chi^2)$ real parameters. We begin by initializing $\psi=\hat{H}^1$.
Note that $\hat{H}^1\in \mps{2}\subseteq \mps{\chi}$.
Each step of the algorithm updates $\psi$ according to
$\psi\to \hat{H}^j \psi$ (even steps) or $\psi\to \hat{V}^j \psi$ (odd steps). 
This update is realized simply by taking the product of tensors
of $\psi$ with the respective tensors of $\hat{H}^j$ or $\hat{V}^j$ which takes
time $O(d\chi^2)$. 
Since $\hat{H}^j$ and $\hat{V}^j$ map $\mps{\chi}$ to $\mps{2\chi}$,
extra measures have to be taken to reduce the bond dimension after each update. 
To this end we apply the truncation algorithm described in Section~4.5 of Ref.~\cite{Schollwock11}.
We shall use a function  Truncate$()$ that
takes as input a state $\phi\in \mps{2\chi}$ and returns a state $\psi\in \mps{\chi}$
approximating $\phi$. Such an approximation is obtained by computing the Schmidt
decomposition of $\phi$ across each bipartite cut of the chain and 
retaining only the $\chi$ largest Schmidt coefficients. 
A detailed implementation of the function Truncate$()$ is described in the next section. 
The  last step of the algorithm is to compute the inner product between the final
state $\psi\in \mps{\chi}$ and $\hat{H}^d\in \mps{2}$. This can be done
in time $O(d\chi^3)$ by applying the standard contraction method for MPS.
As we explain in the next section, each call to the function Truncate$()$
involves $2d-1$ QR-decompositions and SVD-decompositions
on matrices of size $2\chi\times 2\chi$ and $2\chi\times \chi$ respectively, 
which takes time $O(d\chi^3)$.
Since we need one truncation for each column of the lattice, the overall
running time of the algorithm is $O(d^2\chi^3)=O(n\chi^3)$. 
The above steps can be summarized as follows. 

\begin{center}
\fbox{\parbox{0.9\linewidth}{
\begin{algorithmic}
\State{\hspace{2.2cm} {\bf Algorithm~2}}
\State{{\bf Input:} Pauli operator $f$}
\State{{\bf Output:} Approximation to $\pi(f\calG)$}
\State{}
\State{$\psi\gets \hat{H}^1$}
\For{$j=1$ to $d-2$}
\State{$\psi\gets$\Call{Truncate}{$\hat{V}^j \psi$}}
\State{$\psi\gets$\Call{Truncate}{$\hat{H}^{j+1} \psi$}}
\EndFor
\State{$\psi\gets$\Call{Truncate}{$\hat{V}^{d-1} \psi$}}
\State{\Return $\langle \hat{H}^d |\psi\rangle$}
\end{algorithmic}
}}
\end{center}
\[
\quad
\]

\subsection{Truncation of a matrix product state}
\label{sect:truncate}

In this section we describe  implementation of the function Truncate$()$ in Algorithm~2. 
Our implementation closely follows Section~4.5 of Ref.~\cite{Schollwock11}. 
For the sake of completeness, we begin by summarizing the necessary facts about matrix product states.
Below we use a notation $L\equiv 2d-1$ for the number of qubits per column of the lattice. 

A matrix product state $|\psi\rangle\in (\CC^2)^{\otimes L}$ 
describing a chain of $L$ qubits is 
defined by a list of $2L$  matrices $A_0(s),A_1(s)$, where
$s=1,\ldots,L$ is a qubit index (site of the chain).
 Any amplitude of $\psi$  in the standard basis 
is expressed as a product  of $L$ matrices
\begin{equation}
\label{eq:MPS}
\langle x|\psi\rangle= A_{x_1}(1)A_{x_2}(2) \cdots A_{x_L}(L) , \quad x\in \{0,1\}^L.
\end{equation}
We shall use a shorthand notation $A(s)$ for the pair of matrices $A_0(s), A_1(s)$
at some particular qubit $s$. 
Likewise $A$ will stand for the full matrix product state. 
The $L$-qubit state defined in \Eq{MPS} will be denoted $\psi(A)$. 
Let $r(s)$ and $c(s)$ be the number of rows and columns respectively in $A_{0,1}(s)$ (we shall always assume $A_0(s)$ and $A_1(s)$ 
have the same dimensions). 
Since we want the 
the product of matrices in \Eq{MPS} to be a $1\times 1$ matrix (a complex number),
dimensions of the matrices must satisfy
\[
r(1)=1, \quad c(L)=1, \quad c(s)=r(s+1) \quad \mbox{for $1\le s<L$}.
\]
A  matrix product state is said to have a {\em bond dimension} $\chi$ 
iff $r(s)\le \chi$ and $c(s)\le \chi$ for all qubits $s$. 
Let $\mps{\chi}$ be the set of all matrix product states $A$ 
on $L$ qubits  with the bond dimension $\chi$.
We shall say that $A(s)$ has a  {\em left canonical form} (LCF)
or {\em right canonical form} (RCF)
iff 
\begin{equation}
\label{eq:LCF}
A_0(s)^\dag A_0(s) + A_1(s)^\dag A_1(s) =I_{c(s)}
\end{equation}
or 
\begin{equation}
\label{eq:RCF}
A_0(s) A_0(s)^\dag + A_1(s) A_1(s)^\dag =I_{r(s)}
\end{equation}
respectively. Here $I_n$ denotes the identity matrix of size $n\times n$.
The importance of LCF and RCF comes from the following lemma.
Here and below we use a notation $e^i$ for the column vector 
$[0,\ldots,0,1,0,\ldots,0]\tp$ with `$1$' at the $i$-th coordinate.
\begin{lemma}
\label{lem:LCF}
Suppose $A(s)$ has LCF for $s=1,\ldots,m$.
For each $\alpha=1,\ldots,c(m)$ define a state $\phi_\alpha\in (\CC^2)^{\otimes m}$
with amplitudes 
\begin{equation}
\label{eq:phi}
\langle x|\phi_\alpha\rangle = A_{x_1}(1) A_{x_2}(2) \cdots A_{x_m}(m) e^\alpha,
\quad x\in \{0,1\}^m.
\end{equation}
Then $\phi_\alpha$ form an orthonormal family of vectors, i.e.,
$\langle \phi^\beta|\phi^\alpha\rangle=\delta_{\alpha,\beta}$ 
for all $1\le \alpha,\beta\le c(m)$.
\end{lemma}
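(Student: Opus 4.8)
The plan is to prove orthonormality by a single induction on the chain length, showing that the full Gram matrix of the family $\{\phi_\alpha\}$ equals the identity. The first thing I would do is reduce the statement about inner products to a statement about one matrix. Since every amplitude $\langle x|\phi_\alpha\rangle=A_{x_1}(1)\cdots A_{x_m}(m)\,e^\alpha$ is a scalar ($1\times 1$ matrix), taking complex conjugates and summing over $x\in\{0,1\}^m$ gives
\[
\langle\phi^\beta|\phi^\alpha\rangle=(e^\beta)^\dag M_m\, e^\alpha,\qquad
M_k=\sum_{x\in\{0,1\}^k}\bigl(A_{x_1}(1)\cdots A_{x_k}(k)\bigr)^\dag\bigl(A_{x_1}(1)\cdots A_{x_k}(k)\bigr),
\]
where $M_k$ is the $c(k)\times c(k)$ matrix defined above. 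Thus it suffices to prove $M_m=I_{c(m)}$, since then $\langle\phi^\beta|\phi^\alpha\rangle=(e^\beta)^\dag e^\alpha=\delta_{\alpha,\beta}$.

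The base case $k=1$ is immediate from the LCF relation at the first site: $M_1=A_0(1)^\dag A_0(1)+A_1(1)^\dag A_1(1)=I_{c(1)}$ by \Eq{LCF}. For the inductive step I would assume $M_{k-1}=I_{c(k-1)}$ and peel off the last site. Writing each length-$k$ product as a length-$(k-1)$ product times $A_{x_k}(k)$ and pulling the sum over $x_1,\ldots,x_{k-1}$ inside, the middle factor collapses to $M_{k-1}$, so
\[
M_k=\sum_{x_k\in\{0,1\}}A_{x_k}(k)^\dag\, M_{k-1}\, A_{x_k}(k)
=A_0(k)^\dag A_0(k)+A_1(k)^\dag A_1(k)=I_{c(k)},
\]
where the middle equality uses the inductive hypothesis and the last uses \Eq{LCF} at $s=k$. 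This closes the induction and yields $M_m=I_{c(m)}$.

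The argument is essentially bookkeeping, and I do not expect a genuine analytic obstacle. The one point requiring care is the dimension matching at each junction, $c(s)=r(s+1)$: this is exactly what guarantees that the identity $M_{k-1}=I_{c(k-1)}=I_{r(k)}$ produced at one site has the correct size to be sandwiched between $A_{x_k}(k)^\dag$ and $A_{x_k}(k)$ at the next. I would therefore state the scalar-to-matrix-element reduction explicitly at the outset, so that the induction cleanly targets the single identity $M_m=I_{c(m)}$ rather than manipulating the double sum over $x$ and the conjugated products directly.
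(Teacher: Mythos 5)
Your proof is correct and follows essentially the same route as the paper's: both reduce the inner product to a sandwiched product of matrices and collapse the sum site by site using the LCF relation $A_0(s)^\dag A_0(s)+A_1(s)^\dag A_1(s)=I$, starting from site $1$. Packaging the iteration as a formal induction on the Gram matrix $M_k$ is just a cleaner bookkeeping of the paper's ``apply the same argument to the remaining qubits'' step.
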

\begin{proof}
Indeed, using the definition of $\phi^\alpha$ the inner product $\langle \phi^\beta|\phi^\alpha\rangle=
\sum_x \langle\phi^\beta|x\rangle\langle x|\phi^\alpha\rangle$ can be written as 
\[
\sum_{x} (e^\beta)\tp A_{x_m}(m)^\dag \cdots A_{x_1}(1)^\dag A_{x_1}(1) \cdots A_{x_m}(m) e^\alpha,
\]
where the sum runs over $x\in \{0,1\}^m$.
The LCF at qubit $1$ implies $\sum_{x_1} A_{x_1}(1)^\dag A_{x_1}(1)=I_{c(1)}$.
Hence $\langle \phi^\beta|\phi^\alpha\rangle$ is equal to
\[
\sum_{x} (e^\beta)\tp A_{x_m}(m)^\dag \cdots A_{x_2}(2)^\dag A_{x_2}(2) \cdots A_{x_m}(m) e^\alpha,
\]
where the sum runs over $x\in \{0,1\}^{m-1}$. Applying the same argument to the remaining qubits
one arrives at $\langle \phi^\beta|\phi^\alpha\rangle=(e^\beta)\tp  e^\alpha=\delta_{\alpha,\beta}$.
\end{proof}
Exactly the same arguments show that if $A(s)$ has RCF for all $s>m$
then  states $\theta^\alpha\in (\CC^2)^{\otimes (L-m)}$ with amplitudes
\begin{equation} 
 \label{eq:theta}
\langle y|\theta^\beta\rangle = (e^\beta)\tp A_{y_1}(m+1) \cdots A_{y_{L-m}}(L)
\end{equation}
form an orthonormal family for $1\le \beta\le r(m+1)$.

The first step of the function Truncate is transforming all matrices $A(s)$ to LCF.
We shall describe this step by a function LeftCanonical($A$) that takes as input 
a matrix product state 
$A\in \mps{\chi}$ and  returns a 
pair $(\Gamma,B)$, where $\Gamma\in \CC$ is a scalar and $B\in \mps{\chi}$ 
is a matrix product state such that $\psi(A)=\Gamma\cdot \psi(B)$
and $B$ has LCF at every qubit.  We shall define
LeftCanonical($A$)  by the following algorithm.

\begin{center}
\fbox{\parbox{0.9\linewidth}{
\begin{algorithmic}
\Function{$(\Gamma,B)$=LeftCanonical}{$A$} 
\For{$s=1$ to $L$}
\State{$(Q,R)\gets \mbox{QR-decomposition of $A(s)$} $}
\State{\hspace{1.4cm} as defined in Eqs~(\ref{eq:QR},\ref{eq:QR0})}
\State{$B_0(s) \gets Q_0$}
\State{$B_1(s) \gets Q_1$}
 \If {$s<L$}
\State{$A_0(s+1) \gets RA_0(s+1)$}
\State{$A_1(s+1) \gets RA_1(s+1)$}
 \Else 
 \State{$\Gamma \gets R$}
 \EndIf
 \EndFor
 \EndFunction
\end{algorithmic}
}}
\end{center}

Let us explain  the QR-decomposition step in the above algorithm
and prove its correctness.
Consider any qubit $s$ and represent $A(s)$ as a block matrix 
\begin{equation}
\label{eq:QR}
A(s)=\left[ \ba{c} A_0(s) \\ A_1(s) \\ \ea\right].
\end{equation}
Note that $A(s)$ has $2r(s)$ rows and $c(s)$ columns. 
Let $m=\min{\{ c(s),2r(s)\}}$. Applying the `economic'  QR-decomposition to $A(s)$ one gets
\begin{equation}
\label{eq:QR0}
A(s)=QR,
\end{equation}
where $Q$ has dimensions $2r(s)\times m$,   $R$ has dimensions $m\times c(s)$,
and  columns of $Q$ form an orthonormal family of vectors, that is,
$Q^\dag Q=I_m$. Finally, $R$ is an upper triangular matrix (this property will not be important for us). 
Let us write 
\[
Q=\left[ \ba{c} Q_0 \\ Q_1 \\ \ea\right],
\]
where $Q_{0,1}$ have dimensions $r(s)\times m$. The property $Q^\dag Q=I_m$
is equivalent to $Q_0^\dag Q_0 + Q_1^\dag Q_1=I_m$. 
Hence $B(s)$ defined in the above algorithm has LCF. 
Note that dimensions of $B_{0,1}(s)$ may or may not be equal to the ones of $A_{0,1}(s)$. 
Let $A'_{0,1}(s+1)=RA_{0,1}(s+1)$ be the updated version of $A(s+1)$
defined in the algorithm. Obviously $A_x(s)A_y(s+1)=B_x(s)A_y'(s+1)$ for any
$x,y=0,1$. Thus $A_{x_1}(1)\cdots A_{x_L}(L)$ is equal to 
\[
B_{x_1}(1) \cdots B_{x_s}(s) A_{x_{s+1}}'(s+1) A_{x_{s+2}}(s+2) \cdots A_{x_L}(L)
\]
for all $x\in \{0,1\}^L$ and for all $s=1,\ldots,L-1$. The last step of the algorithm
($s=L$) applies a QR-decomposition to a column vector $A(L)$, possibly updated by the 
previous step of the algorithm. Hence $Q$ is  a unit-norm
column vector of size $2r(L)$ while $R$ is a scalar which determines normalization
of the overall state. This proves that $\psi(A)=\Gamma\cdot \psi(B)$ and $B$ has LCF
at every qubit. 

Suppose  the input matrix product state $A$ has bond dimension $\chi$.
Then the computational cost of each QR-decomposition is $O(\chi^3)$. Therefore, the 
function LeftCanonical($A$) can be computed in time $O(L\chi^3)$. Since no step
of the algorithm increases dimensions of the matrices, the final matrix product state
$B$ also has bond dimension $\chi$.

We are now ready to describe the function Truncate. 
Choose any integer $1\le m\le L$ and partition the chain as $\calL\cup m\cup \calR$, where 
\[
\calL=\{1,\ldots,m-1\} \quad \mbox{and} \quad \calR=\{m+1,\ldots,L\}.
\]
Consider a matrix product state $A$ such that $A(s)$ has LCF for all
$s\in \calL$ and RCF for all $s\in \calR$. Suppose also that 
the matrices $A_{0,1}(s)$ have dimensions at most $\chi$
for all $s\in \calR$ and at most $\tilde{\chi}$ for all $s\in \calL$. We assume that
$\tilde{\chi}>\chi$ (we shall be interested in the case $\tilde{\chi}=2\chi$). 
Using the orthonormal families of states $\phi^\alpha$ and $\theta^\alpha$ defined in Eqs.~(\ref{eq:phi},\ref{eq:theta})
one can write $\psi(A)$ as
\begin{equation}
\label{eq}
\psi(A)=\sum_{\alpha=1}^{r(m)} \sum_{\beta=1}^{c(m)} \sum_{x=0,1} A_x(m)_{\alpha,\beta} 
|\phi^\alpha\otimes x \otimes \theta^\beta\rangle. 
\end{equation}
We shall compute the Schmidt decomposition of $\psi(A)$ with respect to the
partition $\calL \cup \{m,\calR\}$ and truncate this decomposition by retaining only the
$\chi$ largest Schmidt coefficients. To this end consider the singular value decomposition (SVD)
of $A(m)$, namely, 
\begin{equation}
\label{eq:svd}
A(m)\equiv \left[ \ba{c|c} A_0(m) &  A_1(m) \\ \ea\right]=
U S V^\dag,
\end{equation}
where the matrices $U,S,V$ have dimensions 
\[
\dim{U}=r(m)\times n, \quad \dim{S}=n\times n, \quad \dim{V}=2c(m)\times n,
\]
with  
\[
n=\min{\{ r(m),2c(m)\}}.
\]
The matrix $S$ is diagonal such that $S_{i,i}$ is the
$i$-th largest singular value of $A(m)$. The matrices $U$ and $V$ are isometries, that is,
\[
U^\dag U = V^\dag V = I_n.
\]
Let us represent $V$ as a block matrix 
\begin{equation}
\label{eq:Vblock}
V=\left[ \ba{c} V_0 \\ V_1 \\ \ea\right],
\end{equation}
where $V_0$ and $V_1$ have dimensions $c(m)\times n$. 
Using the above SVD one can rewrite $\psi(A)$ as
\begin{equation}
\label{eq:svd1}
\psi(A)=\sum_{i=1}^n S_{i,i} |\hat{\phi}^i\rangle \otimes |\hat{\theta}^i\rangle,
\end{equation}
where $\hat{\phi}^i$ and $\hat{\theta}^i$  are orthonormal family of $n$ states
defined as
\begin{equation}
\label{eq:svd2}
|\hat{\phi}^i\rangle = \sum_{\alpha=1}^{r(m)} U_{\alpha,i} \, |\phi^\alpha\rangle
\end{equation}
and
\begin{equation}
\label{eq:svd3}
|\hat{\theta}^i\rangle = \sum_{\beta=1}^{c(m)} (V_0^*)_{\beta,i} |0\otimes \theta^\beta\rangle
+ (V_1^*)_{\beta,i} |1\otimes \theta^\beta\rangle.
\end{equation}
We conclude that \Eq{svd1} defines the Schmidt decomposition of $\psi(A)$ with respect
to the partition $\calL\cup \{m,\calR\}$, while $S_{i,i}$ are the Schmidt coefficients. 
The best rank-$\chi$ approximation to $\psi(A)$ which we denote $\psi'(A)$ is obtained from \Eq{svd1} by retaining
$\chi$ largest Schmidt coefficients, that is, 
\begin{equation}
\label{eq:svd4}
\psi'(A)=\sum_{i=1}^\chi S_{i,i} |\hat{\phi}^i\rangle \otimes |\hat{\theta}^i\rangle.
\end{equation}
Decompose  matrices $U,S,V$  into blocks such that 
\begin{equation}
\label{eq:trun}
U=\left[ \ba{c|c} U' &  U'' \\ \ea \right], \; 
S=\left[ \ba{cc} S' & 0 \\ 0 & S'' \\ \ea \right], \; 
V'=\left[ \ba{c|c} V' & V'' \\ \ea \right].
\end{equation}
By definition, $U',S',V'$ have dimensions
\[
\dim{U'}=r(m)\times \chi, \; 
\dim{S'}=\chi \times \chi, \;  \dim{V'}=2c(m)\times \chi.
\]
Furthermore, $S'$ is a square diagonal matrix that contains $\chi$ largest singular values
of $A(m)$, while  $U'$ and $V'$ are isometries, that is, 
$(U')^\dag U'=I_\chi$ and 
$(V')^\dag V'=I_\chi$.  
We conclude that   $\psi'(A)=\psi(A')$, where $A'(s)=A(s)$ for $s\in \calR$
and for $s\in \calL\setminus m$, 
\[
A'_{0,1}(m-1)=A_{0,1}(m-1)U'S' \quad \mbox{and} \quad A'(m)=(V')^\dag.
\]
The fact that $V'$ is an isometry implies that 
$A'(m)$ has RCF, so we can apply the above procedure
again with $\calL=\calL\setminus \{m-1\}$ and $\calR=\calR\cup \{m\}$. 
Starting from $m=L$ and moving towards the left boundary of the chain
one can reduce the bond dimension from $\tilde{\chi}$ to $\chi$. 
The above truncation algorithm can be summarized as follows.

\begin{center}
\fbox{\parbox{0.9\linewidth}{
\begin{algorithmic}
\Function{Truncate}{$A$} 
\State{\Call{$(\Gamma,A)\gets$LeftCanonical}{$A$}}
\For{$m=L$ to $1$}
\State{$(U,S,V)\gets \mbox{svd-decomposition of $A(m)$} $}
\State{\hspace{1.7cm} defined in \Eq{svd}}
\State{$U',S',V' \gets \mbox{submatrices of $U,S,V$}$}
\State{\hspace{1.8cm}  defined in \Eq{trun}}
\State{$A_{0,1}(m-1) \gets A_{0,1}(m-1)U'S'$}
\State{$A_{0,1}(m) \gets (V'_{0,1})^\dag$} 
\EndFor
\State{\Return $\Gamma\cdot A$}
 \EndFunction
\end{algorithmic}
}}
\end{center}
Here we decomposed $V'$ into blocks $V'_0$ and $V'_1$ 
similar to \Eq{Vblock}.

\section{Numerical results}
\label{sect:numerics}

We have studied the following combinations of noise models and
decoders:
\begin{enumerate}
\item $X$-noise, ML decoder
\item $X$-noise, MPS decoder
\item $X$-noise, MWM decoder
\item Depolarizing noise, MPS decoder
\item Depolarizing noise, MWM  decoder
\end{enumerate}
For each of the above combinations we  
estimated the probability of a logical error --- the decoding outcome
in which the recovery operator 
differs from the actual error  by a logical Pauli operator
(we do not differentiate between $\overline{X},\overline{Y}$, or $\overline{Z}$ logical errors). 
The performance of each decoder was measured in terms of its error threshold and its
{\em badness} parameter --- the ratio between the logical error probabilities of a given decoder
and the best available decoder for the considered noise model. Thus badness $\ge 1$
for any decoder with smaller values indicating better decoders. 
 The exact ML decoder and MPS decoders were implemented as described in 
\Sect{Xnoise} and \Sect{XYZnoise} respectively. The MWM decoder was implemented
by a reduction from the minimum weight perfect matching problem to the maximum weight
matching problem as described in Ref.~\cite{BV13}.

\begin{figure}[h]
\centerline{\includegraphics[height=6.5cm]{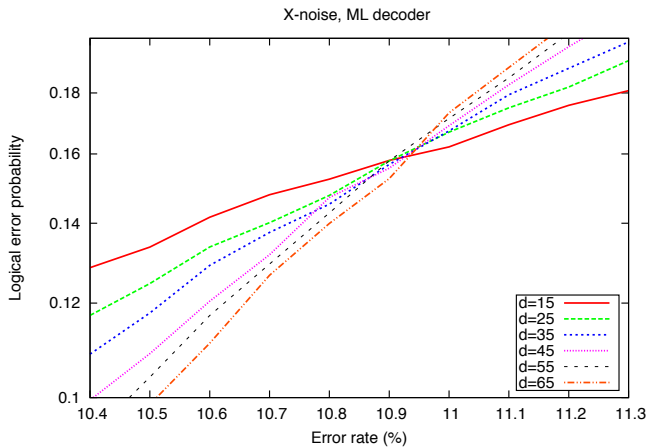}}
\caption{X-noise: exact implementation of the ML decoder.
The data suggest that the threshold error rate
$\epsilon_0$ is between $10.9\%$ and $11\%$, which is in a good agreement with the estimate
$\epsilon_0=10.93(2)$
of Ref.~\cite{Merz01} which calculated the phase transition point in the respective spin model.
Each curve has data points at error
rates $\epsilon=10.4, 10.5,\ldots,11.3\%$. To compute the logical error
probability, at least $5,000$ failed error correction trials have been accumulated
for each datapoint. 
\label{fig:Xth}
}
\end{figure}

Let us first discuss our results for the X-noise.  
The threshold error rate
$\epsilon_0$ of the ML decoder coincides with the critical density of anti-ferromagnetic bonds in the
random-bond Ising model  on the Nishimori line~\cite{DKLP01}. The latter has been estimated
numerically by Mertz and Chalker~\cite{Merz01} who found $\epsilon_0=10.93(2)\%$.
Our data shown at \Fig{Xth} suggest that $10.9\%\le \epsilon_0\le 11\%$ which is 
in a good agreement with the estimate of Ref.~\cite{Merz01}. 
For comparison, the MWM decoder is known to have the threshold $\epsilon_0\approx 10.31\%$,
see~\cite{Wang03}.

\onecolumngrid
\begin{center}
\begin{figure}[h]
\centerline{\includegraphics[height=8cm]{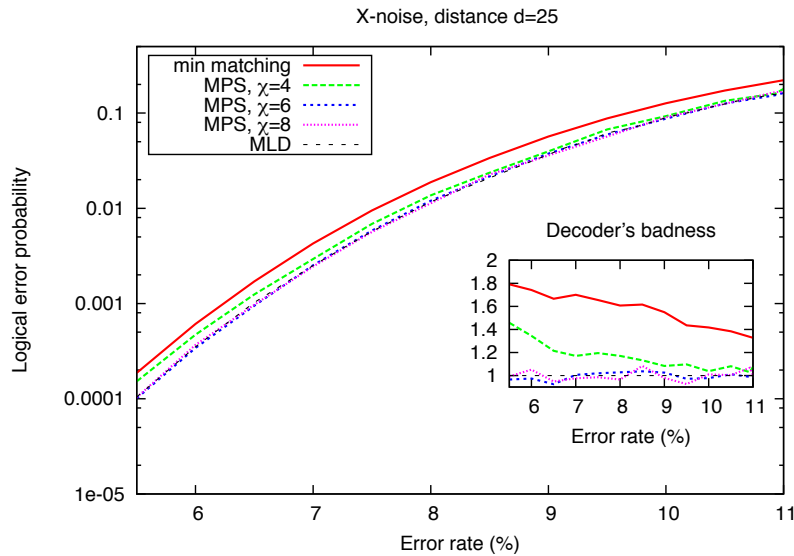}}
\caption{X-noise: exact and approximate implementations of the ML decoder.
Logical error probability as a function of the error rate $\epsilon$
is shown.  The curves representing the exact MLD and MPS decoders with
$\chi=6,8$ are too close to be distinguishable on the main plot. 
The red curve represents the standard minimum weight matching decoder. 
The inset shows 'badness' of various decoders as a function of the error rate.
We define the badness 
as the ratio between  logical error probabilities of a given decoder and the optimal decoder (MLD).
Each curve has data points at error
rates $\epsilon=5, 5.5,6,\ldots,11\%$. To compute the logical error
probability, at least $1,000$ failed error correction trials have been accumulated
for each datapoint. 
\label{fig:Xcomp}
}
\end{figure}
\end{center}
\twocolumngrid

The performance of  different decoders  for a fixed code distance $d=25$
and a wide range of error rates is shown at \Fig{Xcomp}. 
We observed that the MWM decoder remains nearly optimal 
for all simulated  error rates with the badness parameter $\le 2$,
 even though for these error rates the logical error probability changes by several orders of magnitude. 
The slight difference between MLD and the MWM decoder can be explained
by the fact that the latter ignores the error degeneracy~\cite{Stace10}. The data shown on \Fig{Xcomp}
suggests that for X-noise ignoring the error degeneracy does not have a significant impact on the 
performance, even for large error rates and large code distances.

Perhaps more surprisingly, \Fig{Xcomp} demonstrates 
 that the MPS decoder with a relatively small bond dimension
$\chi=6,8$ is virtually indistinguishable from the optimal one in terms of the logical error probability.
This serves as a numerical proof of correctness for the MPS decoder.  

We observed numerically that the exact MLD algorithm described in \Sect{Xnoise} becomes
very sensitive to rounding errors in the regime of large code distances and small error rates.
One way to suppress rounding errors is to enforce an orthogonality condition
$M^TM=I$ on the covariance matrix $M$ in Algorithm~1.
The  orthogonality condition  is satisfied automatically if all arithmetic operations are perfect
(because $M$ represents a covariance matrix of a pure Gaussian state, see \Sect{gaussian} for details). 
In practice, we observed that the orthogonality can be quickly lost if no special measures are taken. 
A simple
and computationally cheap solution of the above problem is to
compute the QR-decomposition $M=QR$, where $Q$ is an orthogonal matrix
and $R$ is an upper-triangular matrix. Note that $M^TM=I$ is possible only if 
$R$ is a diagonal matrix with entries $\pm 1$ on the diagonal. 
This form of $R$ can be easily enforced by setting all off-diagonal entries of $R$ to zero
and replacing each diagonal entry $R_{i,i}$ by the sign of $R_{i,i}$.  
Let $\tilde{R}$ be the resulting diagonal matrix. 
We found that replacing $M$ by $M'\equiv (Q\tilde{R}-(Q\tilde{R})^T)/2$  after
each call to the functions SimulateHorizontal and SimulateVertical in Algorithm~1
makes the algorithm more stable against rounding errors. 

Let us now discuss the  depolarizing noise. 
In this case we only have an approximate implementation of MLD 
with no direct means of estimating  the approximation precision.
Hence the first natural question is  whether the MPS decoder
with a fixed bond dimension $\chi$ has a non-zero error threshold $\epsilon_0$. 
Our data suggests (although not conclusively) that the answer is `yes'.
Most importantly, we observed an exponential decay of the logical error probability
as a function of the code distance $d$ for a fixed  error rate, see \Fig{XYZdecay},
where we used $\chi=6$. Assuming that the observed decay does not saturate
for larger $d$, the data shown at \Fig{XYZdecay} gives a lower bound $\epsilon_0 \ge 14\%$.
The logical error probability as a function of the error rate for a fixed $d$
is shown on \Fig{XYZth} which also exhibits a  typical threshold-like behavior
 and suggests that $17\%\le \epsilon_0\le 18.5\%$.
Previously studied approximate versions of MLD such as the renormalization group decoder~\cite{Duclos10}
and the Markov chain decoder~\cite{Hutter14},
as well the MWM decoder~\cite{Wang09threshold} have error thresholds between  $15\%$ and $16\%$.
The threshold of the exact ML decoder corresponding to the phase transition point in the 
disordered eight-vertex Ising model is known to be $\epsilon_0\approx 18.9(3)\%$, see  Ref.~\cite{BombinPRX12}.
Since the correlation length of the Ising model diverges at the phase transition point, 
we expect that the MPS decoder can only achieve this optimal threshold if the bond dimension
$\chi$ is a growing function of the code distance $d$. 

\begin{figure}[h]
\centerline{\includegraphics[height=6.5cm]{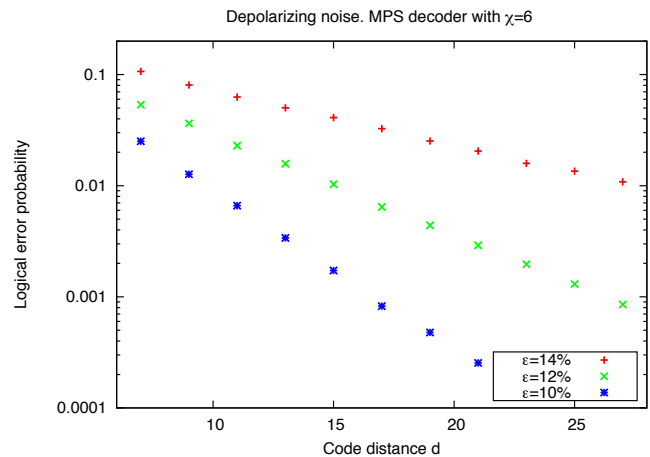}}
\caption{Depolarizing noise: logical error probability of the MPS decoder with $\chi=6$ 
as a function of the code distance $d$ for a 
fixed  error rate $\epsilon=10,12,14\%$.
To compute the logical error
probability, at least $1,000$ failed error correction trials have been accumulated
for each datapoint. 
\label{fig:XYZdecay}
}
\end{figure}

\begin{figure}[h]
\centerline{\includegraphics[height=6.5cm]{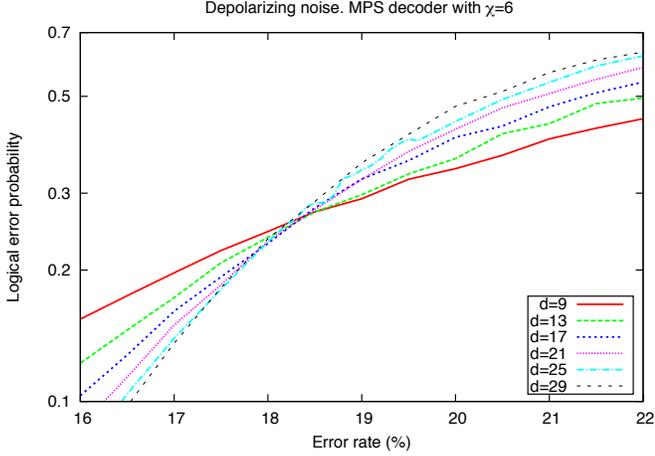}}
\caption{Depolarizing noise: logical error probability of the MPS decoder with $\chi=6$ 
as a function of the error rate $\epsilon$.
Assuming a non-zero error threshold
$\epsilon_0$, the data suggest that $17\%\le \epsilon_0\le 18.5\%$.
To compute the logical error
probability, at least $5,000$ failed error correction trials have been accumulated
for each datapoint. 
\label{fig:XYZth}
}
\end{figure}

The performance of  different decoders  for a fixed code distance $d=25$
and a wide range of error rates is shown at \Fig{XYZcomp}. 
In a striking contrast with the analogous X-noise data, we observed that the MWM decoder
becomes highly non-optimal in the regime of small error rates with the badness parameter above $100$.
This can be attributed to the fact that  MWM decoder often fails to  find the minimum weight error consistent with the syndrome since it  ignores correlations between $X$ and $Z$ errors~\cite{Fowler13optimal}.
We also observed that the logical error probability of MPS decoders converges very quickly as one increases
the bond dimension. The data shown on \Fig{XYZcomp}
indicates that the MPS decoder with $\chi=6$ is nearly optimal for all error rates
and all code distances $d\le 25$.

While the logical error probability is the most natural figure of merit, 
one may also ask how well the MPS-based algorithm with 
a small bond dimension $\chi$ approximates the coset 
probabilities for some fixed syndrome. 
For simplicity, we considered the  trivial syndrome, that is, the cosets 
$\calG,\overline{X}\calG,\overline{Y}\calG$, and $\overline{Z}\calG$.
We observed a very fast convergence for the most likely coset and a  poor
convergence for the remaining cosets, see Tables~\ref{table:cosetX},\ref{table:cosetXYZ}.
 Since the only goal of the decoder is to identify the most
likely coset, the slower convergence for some of  unlikely cosets might not be a serious drawback. 
\begin{table}[h]
\begin{ruledtabular}
\begin{tabular}{c|c|c}
$\chi$ & $\pi(\calG)\cdot 10^{27}$ & $\pi(\bar{X}\calG)\cdot 10^{57}$ \\
\hline
$2$ & $1.78275$ & $4.72777$  \\
$3$ & $1.78277$ & $5.52579$   \\
$4$ & $1.78283$ & $5.80294$  \\
$5$ &  $1.78283$ & $6.03204$  \\
\end{tabular}
\end{ruledtabular}
\caption{X-noise: probabilities of the two cosets  computed by the MPS algorithm.
 The simulation parameters are
$\epsilon=5\%$ and $d=25$. The exact values of the coset probabilities are $\pi(\calG)=1.78283\cdot 10^{-27}$
and $\pi(\overline{X}\calG)=5.58438\cdot 10^{-57}$.}
\label{table:cosetX}
\end{table}

\begin{table}[h]
\begin{ruledtabular}
\begin{tabular}{c|c|c|c|c}
$\chi$ & $\pi(\calG)\cdot 10^{55}$ & $\pi(\bar{X}\calG)\cdot 10^{89}$ & $\pi(\bar{Y}\calG)\cdot 10^{122}$ & 
$\pi(\bar{Z}\calG)\cdot 10^{90}$ \\
\hline
$2$ & $1.11782$ & $2.81823$ & $36.0410$ & $1.64802$ \\
$3$ & $1.11781$ & $2.81777$ & $7.62958$ & $1.70803$  \\
$4$ & $1.11781$ & $2.81781$ & $2.79984$ & $1.78193$ \\
$5$ &   $1.11781$ & $2.81781$ & $3.24487$ & $2.94628$ \\
\end{tabular}
\end{ruledtabular}
\caption{Depolarizing noise: probabilities of the four cosets computed by the MPS algorithm.
The simulation parameters are  $\epsilon=10\%$ and $d=25$.}
\label{table:cosetXYZ}
\end{table}

\onecolumngrid
\begin{center}
\begin{figure}[h]
\centerline{\includegraphics[height=8cm]{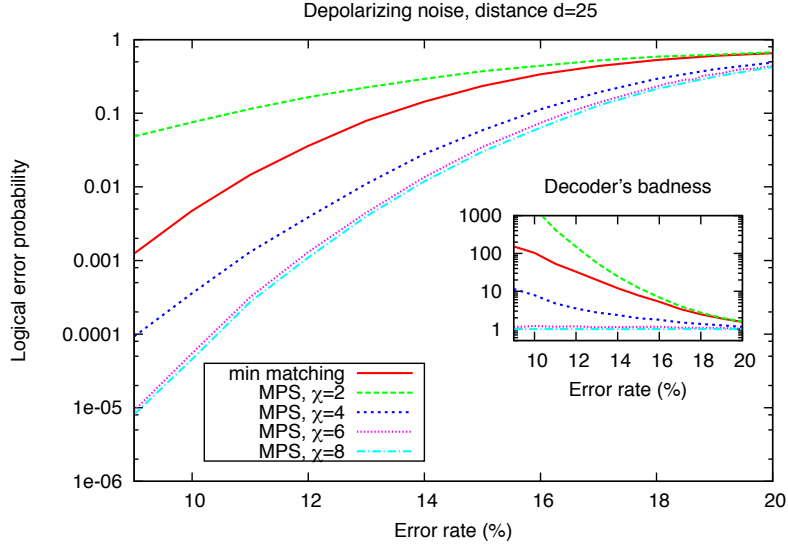}}
\caption{Depolarizing noise: approximate implementations of the ML decoder.
 Logical error probability as a function of the error rate $\epsilon$
is shown.  The red curve represents the  minimum weight matching decoder. 
The inset shows 'badness' of various decoders as a function of the error rate.
We define the badness 
as the ratio between  logical error probabilities of a given decoder and the best decoder (MPS decoder with $\chi=8$).
Each curve has data points at error
rates $\epsilon=9,10,\ldots,20\%$. 
To compute the logical error
probability, at least $1,000$ failed error correction trials have been accumulated
for each datapoint. 
\label{fig:XYZcomp}
}
\end{figure}
\end{center}
\twocolumngrid

The MPS decoder offers a lot of possibilities for improvement. One rather obvious improvement
(employed in the above simulations) is to use a single run of Algorithm~2 to compute two different
coset probabilities. Indeed, suppose we choose the logical operator $\overline{Z}$
supported in the right-most column of the lattice denoted $H^d$ on \Fig{slices}. 
Then the tensor networks constructed for the cosets $\calC^s_I$ and $\calC^s_Z$ are exactly
the same except for the column $H^d$. Since we contract the network column by column starting
from the left-most column $H^1$,  the  difference between the two cosets manifests itself only in the very
last step of  Algorithm~2 (computing the inner product $\langle \hat{H}^d|\psi\rangle$).
Since this step takes a negligible time compared with the rest of the algorithm,
it makes sense to  compute both
probabilities $\pi(\calC^s_I)$ and $\pi(\calC^s_Z)$ by performing a single network contraction. 
The same observation applies to the probabilities $\pi(\calC^s_X)$ and $\pi(\calC^s_Y)$.
We also expect that a choice of the standard error $f(s)$ consistent with the syndrome $s$
may affect the convergence of the algorithm.
 While we have chosen
$f(s)$ by connecting each syndrome to the left/top boundary, it may be advantageous to choose $f(s)$ as
a small-weight error, for example, using the MWM decoder. 
Finally, a challenging open problem is how to extend the MPS decoder to noisy
syndrome extraction. A naive extension would require a contraction of a 3D tensor network.
We anticipate that this problem can be attacked using recently developed algorithms
for simulating 2D quantum systems based on Projected Entangled Pairs States (PEPS),
see~\cite{Murg07,Verstraete04}.

\begin{center}
{\bf Acknowledgments}
\end{center}
We would like to thank Graeme Smith and John Smolin  for helpful comments.
Computational resources for this work were provided by IBM Blue Gene Watson supercomputer center.


\end{document}